\documentclass[copyright,creativecommons,USenglish]{eptcs} 

\usepackage{amsmath,amssymb,amsthm}  
\usepackage{eucal}
\usepackage{subfig}
\usepackage{paralist}
\usepackage{xspace}
\usepackage{algorithm2e}
\usepackage{thm-restate} 
\newcommand{\showfigure}[1]{#1}
\usepackage{tikz}
\showfigure{\usetikzlibrary{positioning,automata,calc,shapes}}
\captionsetup[subfloat]{justification=centering}
\captionsetup{labelfont=bf}

\DeclareFontFamily{U}{MnSymbolD}{}
\DeclareFontShape{U}{MnSymbolD}{m}{n}{<-> MnSymbolD10}{}
\DeclareRobustCommand{\MNSlessdot}{\mbox{\usefont{U}{MnSymbolD}{m}{n}\char114}}
\DeclareRobustCommand{\MNSleqdot}{\mbox{\usefont{U}{MnSymbolD}{m}{n}\char116}}
\DeclareRobustCommand{\eq}{=}

\renewcommand{\phi}{\varphi}

\setlength{\marginparwidth}{2cm}

\newcommand\hide[1]{}

\newcommand{\standout}[1]{{\medskip \noindent {{\textbf{#1}}}}}

\newcommand{\zug}[1]{{\langle #1  \rangle}}
\newcommand{\rzug}[1]{{\langle #1  \rangle}}
\newcommand{\set}[1]{\text{$\{#1\}$}} 
\newcommand\splus{\ensuremath{\!+\!}}
\newcommand\sminus{\ensuremath{\!-\!}}
\newcommand\stimes{\ensuremath{\!\times\!}}

\newcommand\kahler{K\"ahler\xspace}
\newcommand\konig{K\"onig\xspace}
\newcommand\buchi{B\"uchi\xspace}

\newcommand{\N}{\mbox{I$\!$N}}
\renewcommand{\N}{{\mathbb N}}
\newcommand{\Q}{\mathbb{Q}}
\newcommand{\U}{{\cal U}}
\newcommand{\gl}{\ensuremath{gl}}
\newcommand\A{{\cal A}}
\newcommand{\B}{{\cal B}}
\newcommand{\equal}{\ensuremath{=}}

\newcommand{\Z}{{\mathbb Z}}

\newcommand{\DAG}{\textsc{dag}\xspace}
\newcommand{\DPEW}{\textsc{DPeW}\xspace}
\newcommand{\DREW}{\textsc{DReW}\xspace}
\newcommand{\DAGs}{\textsc{dag}s\xspace}
\newcommand{\pri}{\gamma}

\newcommand{\rhoprec}{\ensuremath{\rho_{{\scriptscriptstyle S,\preceq}}}}

\def\squarebox#1{\hbox to #1{\hfill\vbox to #1{\vfill}}}
\renewcommand{\qed}{\hspace*{\fill}
          \vbox{\hrule\hbox{\vrule\squarebox{.667em}\vrule}\hrule}\smallskip}
\newcommand{\abs}[1]{\lvert#1\rvert}

\renewenvironment{proof}{\begin{trivlist}
\item[\hspace{\labelsep}{\bf\noindent Proof: }]
}{\qed\end{trivlist}}

\theoremstyle{plain}
\newtheorem{theorem}{Theorem}[section]
\newtheorem{lemma}[theorem]{Lemma} 
\newtheorem{corollary}[theorem]{Corollary}
\theoremstyle{definition}
\newtheorem{definition}[theorem]{Definition}
\theoremstyle{remark}

\newtheoremstyle{note}
{1.5\topsep}
{1.5\topsep}
{}
{}
{\itshape}
{.}
{ }
{}
\theoremstyle{note}
\newtheorem{example}[theorem]{Example}

\newcommand{\G}{\ensuremath{{G}}\xspace}
\newcommand{\bQ}{\ensuremath{{\bf Q}}\xspace}
\newcommand{\q}{\ensuremath{{\bf q}}\xspace}
\newcommand{\qp}{\ensuremath{{\bf q'}}\xspace}

\newcommand{\Gprime}{\ensuremath{G'}\xspace}

\newcommand{\TGP}{\ensuremath{T}\xspace}

\newcommand{\lr}[1]{\ensuremath{\,\MNSlessdot_{#1}\,}}
\newcommand{\lreq}[1]{\ensuremath{\,\MNSleqdot_{#1}\,}}

\newcommand{\lrb}{\ensuremath{\,\MNSlessdot\,}}
\newcommand{\lrbp}{\ensuremath{\,\MNSlessdot'\,}}
\newcommand{\lreqb}{\ensuremath{\,\MNSleqdot\,}}
\newcommand{\lreqbp}{\ensuremath{\,\MNSleqdot'\,}}
\newcommand{\lreqd}{\ensuremath{\MNSleqdot}}

\SetKwFunction{labelsf}{labels}
\SetKwFunction{fl}{mj}
\SetKwFunction{neph}{neph}
\SetKwFunction{unc}{unc}
\SetKwFunction{lpf}{unc}
\SetKwFunction{lsf}{neph}
\SetKwFunction{descf}{desc}
\SetKwFunction{first}{first}
\SetKwFunction{lmd}{lmd}
\SetKwFunction{min}{min}
\SetKwFunction{max}{max}
\SetKwFunction{ml}{ml}

\newcommand{\fullv}[2]{#2}
\pagestyle{plain}

\title{Profile Trees for B\"uchi Word Automata, with Application to Determinization%
\thanks{Work supported in part by NSF grants CNS 1049862 and CCF-1139011, by NSF
Expeditions in Computing project ``ExCAPE: Expeditions in Computer Augmented
Program Engineering'', by a gift from Intel, by BSF grant 9800096, and by a
stipend from Trinity University.
A full version, with appendices and missing proofs, is available at
\url{http://www.cs.trinity.edu/\~sfogarty/papers/gandalf13rj.pdf}
}}

\author{Seth Fogarty
\institute{Computer Science Department\\Trinity University}
\and
Orna Kupferman
\institute{School of Computer Science and Engineering\\Hebrew University of Jerusalem}
\and
Moshe Y. Vardi\ \ \ \ 
\institute{Department of Computer Science\\Rice University}\ \ \ \ 
\and
Thomas Wilke\ \ \ \ \ \ 
\institute{Institut f\"ur Informatik\\Christian-Albrechts-Universit\"at zu Kiel}\ \ \ \ \ \ 
}

\begin{document}
\maketitle

\begin{abstract}

The determinization of \buchi automata is a celebrated problem, with applications in synthesis,
probabilistic verification, and multi-agent systems.  Since the 1960s, there has been a steady
progress of constructions: by McNaughton, Safra, Piterman, Schewe, and others.  Despite the
proliferation of solutions, they are all essentially ad-hoc constructions, with little theory
behind them other than proofs of correctness. Since Safra, all optimal constructions employ
trees as states of the deterministic automaton, and transitions between states are defined
operationally over these trees.  The operational nature of these constructions complicates
understanding, implementing, and reasoning about them, and should be contrasted with
complementation, where a solid theory in terms of automata run \DAGs underlies modern constructions.

In 2010, we described a \emph{profile}-based approach to B\"uchi complementation, where a profile is
simply the history of visits to accepting states. We developed a structural theory of profiles and
used it to describe a complementation construction that is deterministic in the limit.  Here we
extend the theory of profiles to prove that every run \DAG contains a \emph{profile tree} with at
most a finite number of infinite branches. We then show that this property provides a theoretical
grounding for a new determinization construction where macrostates are doubly preordered sets of
states.  In contrast to extant determinization constructions, transitions in the new construction
are described declaratively rather than operationally.
\end{abstract}

\section{Introduction}

\buchi automata were introduced in the context of decision problems for second-order arithmetic
\cite{Buc62}.  These automata constitute a natural generalization of automata over finite words to
languages of infinite words.  Whereas a run of an automaton on finite words is accepting if the run
ends in an accepting state, a run of a \buchi automaton is accepting if it visits an accepting state
infinitely often.

Determinization of nondeterministic automata is a fundamental problem in automata theory, going back
to \cite{RS59}.  Determinization of \buchi automata is employed in many applications, including
synthesis of reactive systems \cite{PR89a}, verification of probabilistic systems
\cite{CY95,Var85b}, and reasoning about multi-agent systems \cite{AHK02}.  Nondeterministic automata
over finite words can be determinized with a simple, although exponential, \emph{subset
construction} \cite{RS59}, where a state in the determinized automaton is a set of states of the
input automaton.  Nondeterministic \buchi automata, on the other hand, are not closed under
determinization, as deterministic \buchi automata are strictly less expressive than their
nondeterministic counterparts \cite{Lan69}.  Thus, a determinization construction for \buchi
automata must result in automata with a more powerful acceptance condition, such as Muller
\cite{McN66}, Rabin \cite{Saf88}, or parity conditions \cite{KW08,Pit06}.

The first determinization construction for \buchi automata was presented by McNaughton, with a
doubly-exponential blowup \cite{McN66}. In 1988, Safra introduced a singly exponential construction
\cite{Saf88}, matching the lower bound of $n^{O(n)}$ \cite{Lod99}. Safra's construction encodes a
state of the determinized automaton as a labeled tree, now called a \emph{Safra tree}, of sets of
states of the input \buchi automaton.  Subsequently, Safra's construction was improved by Piterman,
who simplified the use of tree-node labels \cite{Pit06}, and by Schewe, who moved the acceptance
conditions from states to edges \cite{Sch09b}.  In a separate line of work, Muller and Schupp
proposed in 1995 a different singly exponential determinization construction, based on
\emph{Muller-Schupp trees} \cite{MS95}, which was subsequently simplified by \kahler and Wilke
\cite{KW08}. 

Despite the proliferation of \buchi determinization constructions, even in their improved and
simplified forms all constructions are essentially ad-hoc, with little theory behind them other than
correctness proofs. These constructions rely on the encoding of determinized-automaton states as
finite trees.  They are operational in nature, with transitions between determinized-automaton
states defined ``horticulturally,'' as a sequence of operations that grow trees and then prune them
in various ways.  The operational nature of these constructions complicates understanding,
implementing, and reasoning about them \cite{ATW05,THB95}, and should be contrasted with
complementation, where an elegant theory in terms of automata run \DAGs underlies modern
constructions \cite{FKV06,KV01c,Sch09}.  In fact, the difficulty of determinization has motivated
attempts to find determinization-free decision procedures \cite{KV05c} and works on determinization
of fragments of LTL \cite{KE12}.

In a recent work \cite{FKVW11}, we introduced the notion of \emph{profiles} for nodes in the run \DAG.
We began by labeling accepting nodes of the \DAG by $1$ and non-accepting nodes by $0$, essentially
recording visits to accepting states.  The profile of a node is the lexicographically \emph{maximal}
sequence of labels along paths of the run \DAG that lead to that node.  Once profiles and a
lexicographic order over profiles were defined, we removed from the run \DAG edges that do not
contribute to profiles.  In the pruned run \DAG, we focused on lexicographically maximal runs.  This
enabled us to define a novel, profile-based \buchi complementation construction that yields
\emph{deterministic-in-the-limit} automata: one in which every accepting run of the complementing
automaton is eventually deterministic \cite{FKVW11} A state in the complementary automaton is a set
of states of the input nondeterministic automaton, augmented with the preorder induced by profiles.
Thus, this construction can be viewed as an augmented subset construction.

In this paper, we develop the theory of profiles further, and consider the equivalence classes of
nodes induced by profiles, in which two nodes are in the same class if they have the same profile.
We show that profiles turn the run \DAG into a \emph{profile tree}: a binary tree of bounded
width over the equivalence
classes. The profile tree affords us a novel singly exponential \buchi determinization construction.
In this profile-based determinization construction, a state of the determinized automaton is a set
of states of the input automaton, augmented with \emph{two} preorders induced by profiles.  Note
that while a Safra tree is finite and encodes a single level of the run \DAG, our profile tree is
infinite and encodes the entire run \DAG, capturing the accepting or rejecting nature of all paths.
Thus, while a state in a traditional determinization construction corresponds to a Safra tree, a
state in our deterministic automaton corresponds to a single level in the profile tree.

Unlike previous \buchi determinization constructions, transitions between states of the determinized
automaton are defined declaratively rather than operationally. We believe that the declarative
character of the new construction will open new lines of research on \buchi determinization.  For
\buchi complementation, the theory of run \DAGs \cite{KV01c} led not only to tighter constructions
\cite{FKV06,Sch09}, but also to a rich body of work on heuristics and optimizations
\cite{DR07,FV10}.  We foresee analogous developments in research on \buchi
determinization.

\section{Preliminaries}

This section introduces the notations and definitions employed in our analysis\hide{ of \buchi
automata}.

\subsection{Relations on Sets}
Given a set $R$, a binary relation $\leq$ over $R$ is a \emph{preorder} if $\leq$ is reflexive and
transitive.  A \emph{linear preorder} relates every two elements: for every $r_1, r_2 \in R$ either
$r_1 \leq r_2$, or $r_2 \leq r_1$, or both.  A relation is \emph{antisymmetric} if $r_1 \leq r_2$
and $r_2 \leq r_1$ implies $r_1 = r_2$. A preorder that is antisymmetric is a \emph{partial order}.
A linear partial order is a \emph{total order}. Consider a partial order $\leq$. If for every $r \in
R$, the set $\set{r' \mid r' \leq r}$ of smaller elements is totally ordered by $\leq$, then we say
that $\leq$ is a \emph{tree order}.  The equivalence class of $r \in R$ under $\leq$,  written
$[r]$, is $\set{r' \mid r' \leq r \text{ and } r \leq r'}$. The equivalence classes under a linear
preorder form a totally ordered partition of $R$.  Given a set $R$ and linear preorder $\leq$ over
$R$, define the minimal elements of $R$ as $\min_{\leq}(R)=\set{r_1 \in R\mid ~r_1 \leq r_2 \text{
for all }r_2 \in R}$.  Note that $\min_\leq(R)$ is either empty or an equivalence class under
$\leq$.  Given a non-empty set $R$ and a total order $\leq$, we instead define $\min_\leq(R)$ as the
the unique minimal element of $R$.

Given two finite sets $R$ and $R'$ where $\abs{R} \leq \abs{R'}$, a linear preorder $\leq$ over $R$,
and a total order $<'$ over $R'$, define the \emph{$\zug{\leq,<'}$-minjection} from $R$ to $R'$ to
be the function $\fl$ that maps all the elements in the $k$-th equivalence class of $R$ to the
$k$-th element of $R'$. The number of equivalence classes is at most $\abs{R}$, and thus at most
$\abs{R'}$. If $\leq$ is also a total order, than the $\zug{\leq,<'}$-minjection is also an
injection.  

\begin{example}
Let $R=\Q$ and $R'=\Z$ be the sets of rational numbers and integers, respectively.  Define the
linear preorder $\leq_1$ over $\Q$ by $x \leq_1 x'$ iff $\lfloor x \rfloor \leq \lfloor x' \rfloor$,
and the total order $<_2$ over $\Z$ by $x <_2 x'$ if $x < x'$. Then, the
$\zug{\leq_1,<_2}$-minjection from $\Q$ to $\Z$ maps a rational number $x$ to $\lfloor x \rfloor$.
\end{example}

\subsection{$\omega$-Automata}
A \emph{nondeterministic $\omega$-automaton} is a tuple $\A=\zug{\Sigma, Q, Q^{in}, \rho, \alpha}$,
where $\Sigma$ is a finite alphabet, $Q$ is a finite set of states, $Q^{in} \subseteq Q$ is a set of
initial states, $\rho \colon Q \times \Sigma \to 2^Q$ is a nondeterministic transition relation, and
$\alpha$ is an acceptance condition defined below.  An automaton is \emph{deterministic} if
$\abs{Q^{in}} = 1$ and, for every $q \in Q$ and $\sigma \in \Sigma$, we have
$\abs{\rho(q,\sigma)}=1$. For a function $\delta \colon Q \times \Sigma \to 2^Q$, we
lift $\delta$ to sets $R$ of states in the usual fashion: $\delta(R,\sigma) = \bigcup_{r \in R}
\delta(r,\sigma)$.  Further, we define the inverse of $\delta$, written $\delta^{-1}$, to be
$\delta^{-1}(r,\sigma)=\set{q \mid r \in \delta(q,\sigma)}$. 

A \emph{run} of an $\omega$-automaton $\A$ on a word $w=\sigma_0\sigma_1\cdots \in
\Sigma^\omega$ is an infinite sequence of states $q_0,q_1,\ldots\in Q^\omega$ such that $q_0 \in Q^{in}$
and, for every $i \geq 0$, we have that $q_{i+1} \in \rho(q_i, \sigma_i)$. Correspondingly, a
\emph{finite run} of $\A$ to $q$ on $w=\sigma_0\cdots \sigma_{n-1} \in \Sigma^*$ is a finite
sequence of states $p_0,\ldots,p_n$ such that $p_0\in Q^{in}$, $p_n=q$, and for every $0 \leq i <
n$ we have $p_{i+1} \in \rho(p_i, \sigma_i)$. 

The acceptance condition $\alpha$ determines if a run is \emph{accepting}.  If a run is not
accepting, we say it is \emph{rejecting}.  A word $w \in \Sigma^\omega$ is accepted by $\A$ if there
exists an accepting run of $\A$ on $w$.  The words accepted by $\A$ form the \emph{language} of
$\A$, denoted by $L(\A)$.  For a \emph{\buchi automaton}, the acceptance condition is a set of
states $F \subseteq Q$, and a run $q_0,q_1,\ldots$ is accepting iff $q_i \in F$ for infinitely many $i$'s.  For
convenience, we assume $Q^{in} \cap F = \emptyset$.  For a \emph{Rabin automaton}, the acceptance
condition is a sequence $\zug{G_0,B_0},\ldots,\zug{G_k,B_k}$ of pairs of sets of states.
Intuitively, the sets $G$ are ``good'' conditions, and the sets $B$ are ``bad'' conditions. A run
$q_o,q_1,\ldots$ is
accepting iff there exists $0 \leq j\leq k$ so that $q_i \in G_j$ for infinitely many $i$'s, while
$q_i \in B_j$ for only finitely many $i$'s.  Our focus in this paper is on nondeterministic \buchi
automata on words (NBW) and deterministic Rabin automata on words (DRW).

\subsection{Safra's Determinization Construction}
This section presents Safra's determinization construction, using the exposition in
\cite{Pit06}.  Safra's construction takes an NBW and constructs an equivalent DRW.  Intuitively, a
state in this construction is a tree of subsets.  Every node in the tree is labeled by the states it
follows. The label of a node is a strict superset of the union of labels of its descendants, and the
labels of siblings are disjoint. Children of a node are ordered by ``age''.  Let $\A=\zug{\Sigma, Q,
Q^{in}, \rho, F}$ be an NBW, $n=\abs{Q}$, and $V=\set{0,\ldots, n-1}$.

\begin{definition}\rm{\cite{Pit06}}
A \emph{Safra tree} over $\A$ is a tuple $t=\zug{N,r,p,\psi,l,G,B}$ where:
\begin{compactitem}
\item $N \subseteq V$ is a set of nodes.
\item $r \in N$ is the root node.
\item $p : (N \setminus \set{r}) \rightarrow N$ is the parent function over $N \setminus \set{r}$.
\item $\psi$ is a partial order defining 'older than' over siblings.
\item $l : N \rightarrow 2^{Q}$ is a labeling function from nodes to
non-empty sets of states. The label of every node is a proper superset of the
union of the labels of its sons. The labels of two siblings are disjoint.
\item $G, B \subseteq V$ are two disjoint subsets of $V$. 
\end{compactitem}
\end{definition}

The only way to move from one Safra tree to the next is through a sequence of ``horticultural''
operations, growing the tree and then pruning it to ensure that the above invariants hold.

\begin{definition}\label{Def:Safra_Condition}
Define the DRW $D^S(\A)=\zug{\Sigma,Q_S,\rho_S,t_0,\alpha }$ where:
\begin{compactitem}
\item $Q_S$ is the set of Safra trees over $\A$. 
\item $t_0=\zug{\set{0},0,\emptyset,\emptyset,l_0,\emptyset,\set{1,\ldots,n-1}}$ where $l_0(0)=Q^{in}$
\item For $t=\zug{N,r,p,\psi,l,G,B} \in Q_S$ and $\sigma \in \Sigma$, the tree
$t'=\rho_S(t,\sigma)$ is the result of the following sequence of operations.
We temporarily use a set $V'$ of names disjoint from $V$. Initially, let
$t'=\zug{N',r',p',\psi',l',G',B'}$ where $N'=N$, $r'=r$, $p'=p$,
$\psi'=\psi$, $l'$ is undefined, and $G'=B'=\emptyset$.
\begin{compactenum}
\item For every $v \in N'$, let $l'(v)=\rho(l(v),\sigma)$.
\item For every $v \in N'$ such that $l'(v) \cap F \neq \emptyset$, create a new node $v' \in V'$
where: $p(v')=v$; $l'(v')=l'(v) \cap F$; and for every $w' \in V'$ where $p(w')=v$ add $(w',v')$ to
$\psi$.
\item For every $v \in N'$ and $q \in l'(v)$, if there is a $w \in N'$ such that $(w,v) \in \psi$ and
$q \in l'(w)$, then remove $q$ from $l'(v)$ and, for every descendant $v'$ of $v$, remove $q$ from
$l'(v')$. 
\item Remove all nodes with empty labels.
\item For every $v \in N'$, if $l'(v)=\bigcup \set{l'(v') \mid p'(v')=v}$ remove all children of
$v$, add $v$ to $G$.
\item Add all nodes in $V \setminus N'$ to $B$.
\item Change the nodes in $V'$ to unused nodes in $V$.
\end{compactenum}
\item $\alpha = \set{\zug{G_0,B_0},\ldots,\zug{G_{n-1},B_{n-1}}}$, where:
\begin{compactitem}
\item $G_i = \set{\zug{N,r,p,\psi,l,G,B} \in Q_S \mid i \in G}$
\item $B_i = \set{\zug{N,r,p,\psi,l,G,B} \in Q_S \mid i \in B}$
\end{compactitem}
\end{compactitem}
\end{definition}

\begin{theorem}{\rm \cite{Saf88}}
For an NBW $\A$ with $n$ states, $L(D^S(\A))\!=\!L(\A)$ and $D^S(\A)$ has $n^{O(n)}$ states.
\end{theorem}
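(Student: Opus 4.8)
The plan is to prove the state bound by a counting argument and the equivalence $L(D^S(\A))=L(\A)$ by matching runs in each direction. For the bound I would first show that every Safra tree has at most $n$ nodes. Assign to each node $v$ a \emph{private} state lying in $l(v)$ but in no child's label; such a state exists by the proper-superset invariant. Since sibling labels are disjoint and labels only shrink along any branch, two nodes sharing a state must be $p$-comparable, so the private sets are pairwise disjoint and nonempty, forcing $\abs{N}\le n$. Counting then gives at most $2^n$ choices of $N\subseteq V$, at most $n^{O(n)}$ choices of the ordered-tree structure $(p,\psi)$, and --- because sibling-disjointness forces the nodes containing a fixed state to lie on a single root-path --- a labeling $l$ determined by the partial map sending each state to its deepest containing node, i.e. at most $(n{+}1)^n$ possibilities; with $2^{O(n)}$ choices for the disjoint pair $G,B$ the total is $n^{O(n)}$.

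For $L(\A)\subseteq L(D^S(\A))$, fix an accepting run $\pi=q_0,q_1,\dots$ of $\A$ on $w$ and follow it through the unique run $t_0,t_1,\dots$ of $D^S(\A)$. At every step $q_s$ lies in the root's label, and whenever $q_s\in F$ a child is spun off (step 2) into which $q_s$ descends, so the deepest node hosting $\pi$ tends to sink each time $\pi$ meets $F$. Since the depth is bounded by $n$ and node names are finite, some name $i$ eventually hosts $\pi$ within its subtree forever and is never again deleted, so $i$ lies in $B$ only finitely often. I would then take $i$ to be the \emph{deepest} such node and argue that the infinitely many $F$-visits of $\pi$, which it can no longer absorb by sinking further, force $i$ to mature --- its children to cover its label (step 5) --- infinitely often, placing $i$ into $G$ infinitely often. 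Hence the Rabin pair indexed by $i$ is met and $w\in L(D^S(\A))$.

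For the converse, suppose $D^S(\A)$ accepts $w$, witnessed by a name $i$ that is in $G$ at times $t_0<t_1<\cdots$ and never in $B$ after $t_0$; then node $i$ persists with nonempty label $L_s=l_s(i)$. The key invariant is that a child of $i$ is born (step 2) holding only $F$-states and thereafter evolves inside $\rho$, so at each breakpoint $t_{j+1}$, where $i$'s label equals the union of its children, every state of $L_{t_{j+1}}$ is reachable from some state of $L_{t_j}$ by a finite run-segment that stays inside the labels of $i$'s subtree and passes through $F$. I would therefore build a tree whose level-$j$ vertices are the states of $L_{t_j}$, with an edge whenever such an $F$-passing segment exists; every vertex then has a parent, each level is nonempty and finite, so \konig's Lemma yields an infinite branch. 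Splicing its segments produces a run of $\A$ on $w$ meeting $F$ in every interval $(t_j,t_{j+1}]$, hence infinitely often, so $w\in L(\A)$.

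The main obstacle in both directions is the bookkeeping across the seven operational steps of $\rho_S$: I must verify that ``hosting a run'' behaves controllably under state-stealing by older siblings (step 3), and that the semantic readings of the $G$- and $B$-events --- maturation and death --- are exactly as used above. In the completeness direction the delicate point is pinning down the correct stabilizing node $i$ and showing that $\pi$'s $F$-visits genuinely force $i$ to mature infinitely often, rather than being blocked by companion states that fail to revisit $F$; in the soundness direction it is the $F$-certification of child labels. Once these invariants --- label monotonicity along branches, the $F$-provenance of children, and the existence of a stable, infinitely-maturing host --- are established, the \konig and pigeonhole arguments are routine.
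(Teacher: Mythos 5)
This theorem appears in the paper purely as quoted background from \cite{Saf88} (in Piterman's exposition); the paper gives no proof of it, so there is no in-paper argument to compare yours against. Judged on its own terms, your outline is essentially Safra's original correctness argument, and its skeleton is sound. The three key ideas are the right ones: (i) the private-state argument giving $\abs{N}\le n$ nodes per tree, and hence $n^{O(n)}$ trees once one notes that the nodes containing a fixed state form a chain along a root-path (two incomparable hosts would force two children of their least common ancestor to share a state, violating sibling disjointness), so the labeling is determined by the map from states to deepest hosts; (ii) for $L(\A)\subseteq L(D^S(\A))$, following the accepting run $\pi$, taking the \emph{deepest} node $i$ that eventually hosts $\pi$ forever (the root witnesses that such nodes exist), and arguing that the $F$-visits force $i$ into $G$ infinitely often; (iii) for the converse, the $F$-provenance of children's labels between consecutive $G$-events of the Rabin-witness node, plus \konig's Lemma to splice the segments into an accepting run. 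The points you flag as bookkeeping are exactly where the work lies, and both go through. In (ii): a state can migrate only to strictly \emph{older} siblings (step 3), so the creation time of the child currently hosting $\pi$ strictly decreases with each migration; hence only finitely many migrations can occur between maturations of $i$, and if $i$ matured only finitely often, $\pi$ would settle permanently in some never-deleted child of $i$, contradicting that $i$ is deepest. In (iii): since \emph{all} children of $i$ are deleted at each $G$-event (step 5), every state in a child's label at the next $G$-event must have entered the child stratum strictly after the previous one, and the only entry point is a step-2 birth, i.e., an intersection with $F$; this yields the $F$-passing segments your \konig argument needs. With those two observations made explicit, your sketch is a correct proof of the cited theorem.
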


\section{From Run \DAGs to Profile Trees}\label{Sect:Profiles}

In this section, we present a framework for simultaneously reasoning about all runs of a \buchi
automaton on a word.  We use a \DAG to encode all possible runs, and give each node in this \DAG a
profile based on its history. The lexicographic order over profiles induces a preorder $\preceq_i$
over the nodes on level $i$ of the run \DAG. Using $\preceq_i$, we prune the edges of the run \DAG,
and derive a binary tree of bounded width.  Throughout this paper we fix an NBW $\A
=\zug{\Sigma,Q,Q^{in},\rho,F}$ and an infinite word $w = \sigma_0\sigma_1\cdots$.

\subsection{Run \DAGs and Profiles}
The runs of $\A$ on $w$ can be arranged in an infinite \DAG $\G=\zug{V,E}$, where 
\begin{itemize}
\item $V \subseteq Q \times \N$ is such that $\rzug{q,i} \in V$ iff there is a finite run of $\A$ 
to $q$ on $\sigma_0\cdots \sigma_{i-1}$.
\item $E \subseteq \bigcup_{i \geq 0} (Q \stimes \set{i}) \!\times\! (Q
\stimes \set{i\splus 1})$ is such that $E(\rzug{q,i},\rzug{q',i \splus 1})$ iff $\rzug{q,i} \in V$ and
$q' \in \rho(q, \sigma_i)$.
\end{itemize}
The \DAG $\G$, called the \emph{run \textsc{dag} of $\A$ on $w$}, embodies all possible runs
of $\A$ on $w$. We are primarily concerned with \emph{initial paths} in $\G$: paths that start in
$Q^{in} \times \set{0}$.  A node $\rzug{q,i}$ is an $F$-node if $q \in F$, and a path in
$\G$ is \emph{accepting} if it is both initial and contains infinitely many $F$-nodes.  An
accepting path in $\G$ corresponds to an accepting run of $\A$ on $w$. If $\G$ contains an
accepting path, we say that $\G$ is \emph{accepting}; otherwise it is \emph{rejecting}.  
Let $G'$ be a sub-$\DAG$ of $G$. For $i \geq 0$, we refer to the nodes in $Q \times \{i\}$ as
{\em level $i$\/} of $G'$. Note that a node on level $i+1$ has edges only from nodes on level $i$.
We say that $G'$ has \emph{bounded width of degree $c$} if every level in $G'$ has at most $c$
nodes. By construction, $\G$ has bounded width of degree $\abs{Q}$.

Consider the run \DAG $\G=\zug{V,E}$ of $\A$ on $w$. Let $f \colon V \to \set{0,1}$ be such that
$f(\rzug{q,i}) = 1$ if $q \in F$ and $f(\rzug{q,i}) = 0$ otherwise. Thus, $f$ labels $F$-nodes by
$1$ and all other nodes by $0$. The \emph{profile} of a path in $\G$ is the sequence of labels of
nodes in the path.  We define the profile of a node to be the lexicographically maximal profile of
all initial paths to that node.  Formally, the profile of a finite
path $b=v_0,v_1,\ldots,v_n$ in $\G$, written $h_b$, is $f(v_0)f(v_1)\cdots f(v_n)$, and the profile
of an infinite path $b=v_0,v_1,\ldots$ \linebreak[3] is $h_b=f(v_0)f(v_1)\cdots$.  Finally, the profile of a node
$v$, written $h_v$, is the lexicographically maximal element of $\set{h_b\mid b \text{ is an initial
path to }v}$.

The lexicographic order of profiles induces a linear preorder over nodes on
every level of $\G$.  We define a sequence of linear preorders $\preceq_i$ over
the nodes on level $i$ of  $\G$ as follows.  For nodes $u$ and $v$ on level
$i$, let $u \prec_i v$ if $h_u < h_v$, and $u \approx_i v$ if $h_u = h_v$.  We
group nodes by their equivalence classes under $\preceq_i$. Since the final
element of a node's profile is $1$ if and only if the node is an $F$-node, all nodes in an
equivalence class agree on membership in $F$. Call an equivalence class an
$F$-class when all members are $F$-nodes, and a non-$F$-class when none of its
members are $F$-nodes. When a state can be reached by two finite runs, a node
will have multiple incoming edges in $\G$.  We now remove from $\G$ all edges
that do not contribute to profiles.  Formally, define the pruned run \DAG
$\Gprime=\zug{V,E'}$ where $E' = \set{\rzug{u,v} \in E \mid \text{for every $u'
\in V$, if $\zug{u',v} \in E$ then $u' \preceq_{\abs{u}} u$}}$. Note that the
set of nodes in $\G$ and $\Gprime$ are the same, and that an edge is removed
from $E'$ only when there is another edge to its destination.

Lemma~\ref{Gprime_Captures_Profiles} states that, as we have removed only edges that do not
contribute to profiles, nodes derive their profiles from their parents in $\Gprime$.

\begin{lemma}{\rm \cite{FKVW11}} \label{Gprime_Captures_Profiles}
For two nodes $u$ and $u'$ in $V$, if $\zug{u,u'} \in E'$, then $h_{u'} = h_u0$ or $h_{u'} = h_u1$.
\end{lemma}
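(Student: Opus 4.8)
The plan is to unwind the definitions of profile and of the pruned edge set $E'$, and show that appending $f(u')$ to $h_u$ yields exactly $h_{u'}$.

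First I would recall that $h_{u'}$ is the lexicographically maximal profile among all initial paths ending at $u'$. Since $\langle u,u'\rangle \in E' \subseteq E$, the node $u$ lies on level $\abs{u'}-1$ and $u'$ is reachable from $u$. Any initial path $b$ to $u$ extends to an initial path $b'=b,u'$ to $u'$ with profile $h_{b'}=h_b\, f(u')$; in particular, taking $b$ to be a path realizing $h_u$, we get that $h_u f(u')$ is a profile of some initial path to $u'$, so $h_{u'} \geq h_u f(u')$ lexicographically. Since $f(u') \in \set{0,1}$, this already gives the lower bound $h_{u'} \geq h_u f(u')$, which is one of the two claimed values.

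For the reverse inequality, I would take an initial path $b'$ to $u'$ realizing the maximal profile $h_{u'}$, and let $u''$ be its penultimate node, so $\langle u'',u'\rangle \in E$ and $b'=b'',u'$ where $b''$ is an initial path to $u''$. Then $h_{u'} = h_{b''} f(u')$ and, since $u''$ sits on level $\abs{u'}-1$, we have $h_{b''} \leq h_{u''}$ by maximality of the profile at $u''$. The crucial point is now the definition of $E'$: because $\langle u,u'\rangle \in E'$, every node $u'''$ with an edge into $u'$ satisfies $u''' \preceq_{\abs{u'}-1} u$, hence in particular $u'' \preceq u$, i.e. $h_{u''} \leq h_u$. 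Chaining these, $h_{u'} = h_{b''} f(u') \leq h_{u''} f(u') \leq h_u f(u')$, which combined with the lower bound yields $h_{u'} = h_u f(u')$.

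Finally I would observe that $f(u') = 1$ exactly when $u'$ is an $F$-node and $0$ otherwise, so $h_{u'}$ equals either $h_u 0$ or $h_u 1$ as claimed. The one subtlety to handle carefully is the lexicographic comparison of profiles of possibly different lengths: all the profiles in play here (for $u$, $u''$, and the paths $b''$) have the same length $\abs{u'}-1$ since the \DAG is leveled, so appending a common final symbol $f(u')$ preserves the lexicographic order, and no prefix/length issues arise. That monotonicity of lexicographic order under appending a common symbol to equal-length strings is the only step needing a word of justification, and it is the main thing to state explicitly rather than a genuine obstacle.
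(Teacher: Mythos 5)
Your proof is correct. Note that the paper does not actually prove this lemma here---it is quoted from \cite{FKVW11}---so there is no in-paper argument to compare against; your two-sided argument is the natural one from the definitions: extending a profile-realizing path to $u$ across the edge $\zug{u,u'}$ gives $h_{u'} \geq h_u f(u')$, and the defining property of $E'$ (every $E$-predecessor $u''$ of $u'$ satisfies $u'' \preceq_{\abs{u}} u$, hence $h_{u''} \leq h_u$) combined with maximality of $h_{u''}$ over paths to $u''$ gives $h_{u'} \leq h_u f(u')$. Your closing observation is also the right one to make explicit: since the run \DAG is leveled, all profiles being compared have equal length, so appending the common final symbol $f(u')$ preserves lexicographic order and no prefix/length subtleties arise.
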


While nodes with different profiles can share a child in $\G$, Lemma~\ref{Gprime_Parents_Equivalent}
precludes this in $\Gprime$. 

\begin{restatable}{lemma}{lemGprimeParentsEquivalent}\label{Gprime_Parents_Equivalent}
Consider nodes $u$ and $v$ on level $i$ of $\Gprime$ and nodes $u'$ and $v'$ on level $i+1$ of
$\Gprime$. If $\zug{u,u'} \in E'$, $\zug{v,v'} \in E'$, and
$u' \approx_{i+1} v'$, then $u \approx_i v$.
\end{restatable}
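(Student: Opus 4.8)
The plan is to reduce the statement to Lemma~\ref{Gprime_Captures_Profiles}, which describes exactly how a node's profile relates to its parent's profile in $\Gprime$. First I would apply that lemma to each of the two edges. Since $\zug{u,u'} \in E'$, we get $h_{u'} = h_u0$ or $h_{u'} = h_u1$, and since $\zug{v,v'} \in E'$, we get $h_{v'} = h_v0$ or $h_{v'} = h_v1$. In each case the child's profile is the parent's profile with a single bit appended.

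Next I would pin down which bit is appended. By the definition of profiles, the final symbol of $h_{u'}$ is $f(u')$, which equals $1$ exactly when $u'$ is an $F$-node, and likewise the final symbol of $h_{v'}$ is $f(v')$. Because $u' \approx_{i+1} v'$ we have $h_{u'} = h_{v'}$, so in particular $u'$ and $v'$ agree on their final symbol and hence on membership in $F$. Consequently the two edges append the same bit: both append $1$ when $u'$ and $v'$ are $F$-nodes, and both append $0$ otherwise.

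Finally I would cancel this shared last symbol. Writing $b \in \set{0,1}$ for the common appended bit, we have $h_{u'} = h_u b$ and $h_{v'} = h_v b$ with $h_{u'} = h_{v'}$, and therefore $h_u = h_v$. By the definition of $\approx_i$, this is exactly $u \approx_i v$, as required.

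I do not expect a genuine obstacle: the argument is a short case analysis driven entirely by Lemma~\ref{Gprime_Captures_Profiles}. The only point that warrants care is the justification that the last symbol of a profile is dictated by $F$-membership, which is what forces the two edges to append the \emph{same} bit rather than possibly different bits; once that observation is in place, the cancellation of the shared suffix is immediate.
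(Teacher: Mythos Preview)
Your proposal is correct and follows essentially the same argument as the paper: both invoke Lemma~\ref{Gprime_Captures_Profiles} for each edge, use $F$-membership (equivalently, the last bit of the profile) to force the same bit to be appended on both sides, and then cancel that common suffix to conclude $h_u=h_v$.
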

\begin{proof}
Since $u' \approx_{i+1} v'$, we have $h_{u'}=h_{v'}$. If ${u'}$ is an $F$-node, then $v'$ is 
an $F$-node and the last letter in both $h_{u'}$ and $h_{v'}$ is $1$. By Lemma
\ref{Gprime_Captures_Profiles} we have $h_{u}1 = h_{u'} = h_{v'} = h_{v}1$. If ${u'}$ and $v'$ are
non-$F$-nodes, then we have $h_{u}0 = h_{u'} = h_{v'} =
h_{v}0$. In either case, $h_{u}=h_{v}$ and ${u} \approx_{i} v$.
\end{proof}

Finally, we have that $\Gprime$ captures the accepting or rejecting nature of $\G$. This result 
was employed to provide deterministic-in-the-limit complementation in \cite{FKVW11}

\begin{theorem}{\rm \cite{FKVW11}} \label{Lexicographic_Edge_Pruning}
The pruned run \DAG $\Gprime$ of an NBW $\A$ on a word $w$ is accepting iff $\A$ accepts
$w$. 
\end{theorem}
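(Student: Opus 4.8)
The plan is to establish the two directions of the biconditional separately, leveraging the fact that $\Gprime$ shares its node set with $\G$ and only differs in its edges.

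The easy direction is that if $\Gprime$ is accepting, then $\A$ accepts $w$. Since $E' \subseteq E$, every initial path in $\Gprime$ is also an initial path in $\G$. An accepting path in $\Gprime$ is initial and contains infinitely many $F$-nodes, so it is an accepting path in $\G$ as well, and hence corresponds to an accepting run of $\A$ on $w$. Thus $\A$ accepts $w$.

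The harder direction is the converse: if $\A$ accepts $w$, then $\Gprime$ is accepting. Here the obstacle is that pruning removes edges, so an accepting path of $\G$ need not survive in $\Gprime$; I cannot simply reuse the witnessing path. The plan is to reason about profiles instead of individual paths. If $\A$ accepts $w$, then $\G$ contains an accepting initial path, so there is a node profile that contains infinitely many $1$'s --- in fact, I would argue that the lexicographically maximal profiles along levels must witness infinitely many $F$-nodes. Concretely, by Lemma~\ref{Gprime_Captures_Profiles}, every edge in $\Gprime$ appends exactly one label to the parent's profile, so paths in $\Gprime$ realize the profiles of their endpoints letter-by-letter. The key structural claim to prove is that every node $v$ in $\Gprime$ has an incoming edge in $E'$ from some node $u$ on the previous level (except for initial nodes at level $0$): this follows because the set of in-$\G$-parents of $v$ is nonempty and finite, so it has a $\preceq$-maximal element, which is retained in $E'$. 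Consequently $\Gprime$ admits, for each node, an initial path whose profile equals $h_v$.

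Putting these together, the main step is to show that $\Gprime$ contains an initial path hitting infinitely many $F$-nodes. I would build such a path by following maximal-profile ancestry backward, or equivalently use a \konig's-lemma argument: since $\G$ is accepting, there is some path whose profile has infinitely many $1$'s, and because profiles are lexicographically maximal, the maximal profile at each level is at least as large, so it too must carry infinitely many $1$'s cofinally. Extracting an actual infinite initial path in $\Gprime$ realizing such a profile, via the bounded width of $\Gprime$ and \konig's lemma applied to the tree of $\Gprime$-paths, yields an accepting path, and therefore $\Gprime$ is accepting. The crux, and the step I expect to demand the most care, is formalizing that the lexicographically maximal profiles remain ``accepting'' under pruning and that an honest infinite path in $\Gprime$ realizes one such profile --- this is exactly where the profile machinery, rather than naive path-chasing, does the work.
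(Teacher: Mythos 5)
First, a framing note: the paper never proves this theorem itself --- it is imported verbatim from \cite{FKVW11} (hence the citation on the theorem statement), so the comparison here is against the argument in that earlier paper, which your attempt would need to reproduce.

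Your easy direction is correct, and your two structural observations for the converse are also correct and useful: every non-initial node of $\Gprime$ has at least one incoming $E'$-edge (its $\G$-parents form a finite nonempty set with a $\preceq$-maximal class), and hence, by Lemma~\ref{Gprime_Captures_Profiles} and induction, every node $v$ is reachable in $\Gprime$ by an initial path whose profile is exactly $h_v$. The genuine gap is in the step you yourself flag as the crux. From ``$\G$ has an accepting path $\rho$'' and ``node profiles lexicographically dominate the prefixes of $h_\rho$'' you conclude that the dominating profiles ``must carry infinitely many $1$'s cofinally.'' That inference is false: lexicographic dominance says nothing about how many $1$'s a sequence contains, because everything after the first position of difference is irrelevant to the comparison. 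For instance, $0\,1\,0^\omega$ is lexicographically larger than $0\,0\,1\,1\,1\cdots$, yet has a single $1$. A node's profile can be driven up by a \emph{rejecting} finite prefix, and the K\"onig extraction you propose only yields an infinite initial path of $\Gprime$ whose profile lexicographically dominates $h_\rho$ --- such a path may perfectly well be rejecting. So the sketch does not establish that $\Gprime$ contains an accepting path, and no amount of care in the extraction step will repair an inference that is false at its core.

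The missing idea, which is how \cite{FKVW11} actually argues, is to maximize over \emph{accepting} paths only and then apply an exchange argument. First, one shows that among the accepting paths of $\G$ there is one, say $\pi = v_0, v_1, \ldots$, with lexicographically \emph{maximal} profile. This existence claim itself needs care: taking the greedy prefix-supremum $h^*$ of all accepting profiles and applying K\"onig's lemma to the finite initial paths realizing its prefixes yields a path with profile $h^*$; and $h^*$ must contain infinitely many $1$'s, since if $h^* = x0^\omega$ then accepting profiles agreeing with $x0^n$ for arbitrarily large $n$ would, being bounded above by $h^*$, be forced to equal $x0^\omega$ exactly --- a contradiction. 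Second, one shows every edge of $\pi$ survives the pruning: if $\zug{v_i,v_{i+1}} \notin E'$, then $v_{i+1}$ has a parent $u$ with $h_{v_i} < h_u$; splicing a finite initial path to $u$ realizing $h_u$ (this is where your profile-realization observation is used) with the edge $\zug{u,v_{i+1}}$ and the tail $v_{i+1}, v_{i+2}, \ldots$ of $\pi$ produces an accepting path whose profile is lexicographically larger than $h_\pi$, contradicting maximality. Without restricting the maximization to accepting paths and without this exchange step, the conclusion does not follow; with them, your preliminary observations slot in exactly where they are needed.
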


\subsection{The Profile Tree}\label{Sect:TGP}
Using profiles, we define the \emph{profile tree} $\TGP$\!, which we show to be a binary tree of
bounded width that captures the accepting or rejecting nature of the pruned run \DAG $\Gprime$. The
nodes of $\TGP$ are the equivalence classes $\set{[u]\mid u \in V}$ of $\Gprime=\zug{V,E'}$.  To remove
confusion, we refer to the nodes of $\TGP$ as \emph{classes} and use and $U$ and $W$ for classes in
$\TGP$, while reserving $u$ and $v$ for nodes in $\G$ or $\Gprime$.  The edges in $\TGP$ are induced
by those in $\Gprime$ as expected: for an edge $\zug{u,v} \in E'$, the class $[v]$ is the child of $[u]$
in $\TGP$.  A class $W$ is a \emph{descendant} of a class $U$ if there is a, possibly empty, path
from $U$ to $W$.

\begin{restatable}{theorem}{thmTGPBinTree}\label{TGPBinTree}
The profile tree $\TGP$ of an $n$-state NBW $\A$ on an infinite word $w$ is a binary tree whose
width is bounded by $n$.
\end{restatable}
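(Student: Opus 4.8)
The plan is to establish two separate claims: that $\TGP$ is a \emph{tree} (each non-root class has a unique parent), and that it is \emph{binary} with \emph{width bounded by $n$}.

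\medskip

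\noindent\textbf{Tree structure.} First I would show that every class $W \neq [\text{root}]$ has a unique parent in $\TGP$. Existence of a parent is immediate: any node $v$ with $W = [v]$ on level $i+1$ has at least one incoming edge in $\Gprime$ (the pruning in $E'$ only removes an edge when another edge to the same destination survives), so $[v]$ has some parent $[u]$. For uniqueness, suppose $W$ has two parent classes $U$ and $U'$. Then there are edges $\zug{u,v}, \zug{u',v'} \in E'$ with $[v] = [v'] = W$, i.e. $v \approx_{i+1} v'$, and $[u] = U$, $[u'] = U'$. By Lemma~\ref{Gprime_Parents_Equivalent}, $v \approx_{i+1} v'$ together with the two surviving edges forces $u \approx_i u'$, so $U = U'$. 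Hence the parent is unique, and $\TGP$ is a genuine (forest, in fact tree once we note all initial nodes share profiles or are handled at level $0$) tree rather than a DAG.

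\medskip

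\noindent\textbf{Binary branching.} Next I would bound the number of children of a class $U$ on level $i$ by $2$. The key is Lemma~\ref{Gprime_Captures_Profiles}: if $\zug{u,v} \in E'$ then $h_v \in \set{h_u0, h_u1}$. All nodes in $U$ share a common profile $h$, so every child node $v$ of a node in $U$ has profile either $h0$ or $h1$. Since the profile determines the equivalence class, the children of $U$ fall into at most the two classes corresponding to profile $h0$ (a non-$F$-class) and profile $h1$ (an $F$-class). Thus $U$ has at most two child classes, and $\TGP$ is binary.

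\medskip

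\noindent\textbf{Width bound.} Finally, the width bound follows because on each level $i$ the classes are precisely the equivalence classes of $\preceq_i$ over the nodes on level $i$, which partition a set of at most $n = \abs{Q}$ nodes. So each level of $\TGP$ has at most $n$ classes, giving width bounded by $n$. I expect the main obstacle to be the uniqueness-of-parent argument, since it is the only place where the structural Lemma~\ref{Gprime_Parents_Equivalent} is essential; the binary and width claims are direct consequences of profile arithmetic and the finiteness of each level. One subtlety to handle carefully is the root and level $0$: I would verify that the initial nodes form classes with no parents and that ``descendant'' paths respect the leveling, so that the single-parent property indeed yields a tree rooted at level $0$.
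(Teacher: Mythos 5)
Your proposal is correct and follows essentially the same route as the paper's own proof: unique parents via Lemma~\ref{Gprime_Parents_Equivalent}, binary branching via Lemma~\ref{Gprime_Captures_Profiles}, and the width bound from the fact that the classes partition the at-most-$n$ nodes on each level of $\G$. The one detail you deferred --- that level $0$ forms a single root class --- is exactly the paper's observation that the standing assumption $Q^{in} \cap F = \emptyset$ forces every initial node to have profile $0$.
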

\begin{proof}
That \TGP has bounded width follows from the fact that a class on level $i$ contains at least one
node on level $i$ of \G, and \G is of bounded width of degree $\abs{Q}$. To prove that every class has one
parent, for a class $W$ let $U = \set{u\mid \mbox{there is } v \in W \mbox{ such that } \zug{u,v}
\in E'}$.  Lemma \ref{Gprime_Parents_Equivalent} implies that $U$ is an equivalence class, and is
the sole parent of $W$.  To show that $\TGP$ has a root, note that as $Q^{in} \cap F = \emptyset$,
all nodes on the first level of $\G$ have profile $0$, and every class descends from this class of
nodes with profile $0$.  Finally, as noted Lemma~\ref{Gprime_Captures_Profiles} entails that a class
$U$ can have at most two children: the class with profile $h_U1$, and the class with profile $h_U0$.
Thus \TGP is binary.
\end{proof}

A \emph{branch} of $\TGP$ is a finite or infinite initial path in $\TGP$.  Since $\TGP$ is a tree,
two branches share a prefix until they \emph{split}.  An infinite branch is \emph{accepting} if it
contains infinitely many $F$-classes, and \emph{rejecting} otherwise. An infinite rejecting
branch must reach a suffix consisting only of non-$F$-classes.  \hide{Note that if $U'$ is a descendant of
both $U$ and $W$, either $U$ is a descendant of $W$, or $W$ is a descendant of $U$.} A class $U$ is
called \emph{finite} if it has finitely many descendants, and a finite class $U$ {\em dies out} on
level $k$ if it has a descendant on level $k-1$, but none on level $k$. Say $\TGP$ is
\emph{accepting} if
it contains an accepting branch, and \emph{rejecting} if all branches are rejecting.

As all members of a class share a profile, we define the profile $h_U$ of a class $U$ to be $h_u$ for
some node $u \in U$.  We extend the function $f$ to classes, so that $f(U)=1$ if~$U$ is an $F$-class, and
$f(U)=0$ otherwise.  We can then define the profile of an infinite branch $b=U_0,U_1,\ldots$ to be
$h_b=f(U_0)f(U_1)\cdots$. For two classes $U$ and $W$ on level $i$, we say that $U \prec_i W$ if $h_U < h_W$.  For
two infinite branches $b$ and $b'$, we say that $b \prec b'$ if $h_b < h_{b'}$.  Note that $\prec_i$
is a total order over the classes on level $i$, and that $\prec$ is a total order over
the set of infinite branches.

As proven above, a class $U$ has at most two children: the class of $F$-nodes with profile $h_U1$,
and the class of non-$F$-nodes with profile $h_U0$.  We call the first class the $F$-child of $U$,
and the second class the non-$F$-child of $U$.  While the \DAG $\Gprime$ can have infinitely many infinite
branches, bounding the width of a tree also bounds the number of infinite branches it may
have. 

\begin{corollary}\label{TGP_Finite}
The profile tree $\TGP$ of an NBW $\A$ on an infinite word $w$ has a finite number of infinite
branches.
\end{corollary}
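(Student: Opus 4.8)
The plan is to obtain the corollary as an immediate consequence of the bounded-width property established in Theorem~\ref{TGPBinTree}: since every level of $\TGP$ contains at most $n$ classes, I will show that $\TGP$ can have at most $n$ infinite branches, which in particular is finite. The essential point is that the tree structure of $\TGP$ prevents distinct branches from ever re-merging, so that pairwise-distinct branches must consume pairwise-distinct classes on every sufficiently deep level.

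First I would record the key consequence of $\TGP$ being a tree, as proven via Lemma~\ref{Gprime_Parents_Equivalent} inside Theorem~\ref{TGPBinTree}: every class has a unique parent. Hence if two infinite branches $b$ and $b'$ first differ on some level---that is, they share a common prefix and then split---they occupy distinct classes at that level and, since each class has a unique parent, at every subsequent level as well. Thus any two distinct infinite branches are separated at some finite level and remain separated thereafter; distinct branches never rejoin.

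Next I would argue by contradiction. Suppose $\TGP$ had at least $n+1$ distinct infinite branches $b_0,\ldots,b_n$. For each pair $b_j, b_l$ with $j \neq l$, let $s_{jl}$ be the first level at which they occupy different classes, and let $m = \max_{j \neq l} s_{jl}$, the maximum taken over the finitely many $\binom{n+1}{2}$ pairs. Since each pair has diverged by level $m$ and, by unique parenthood, stays diverged, all $n+1$ branches occupy pairwise-distinct classes on level $m$. Hence level $m$ contains at least $n+1$ classes, contradicting the fact that $\TGP$ has width bounded by $n$. Therefore $\TGP$ has at most $n$, and in particular finitely many, infinite branches.

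The only mild subtlety is the passage from pairwise separation to simultaneous separation at a single level; this is where I use the finiteness of the chosen collection of branches, since the maximum of finitely many split levels is itself finite. I would also emphasize that the binary branching of $\TGP$ alone does not suffice---a full binary tree has uncountably many infinite branches---so it is precisely the bounded-width hypothesis, supplied by the fact that each class contains at least one node on the corresponding level of the width-$\abs{Q}$ \DAG $\G$, that does the real work here.
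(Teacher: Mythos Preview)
Your argument is correct and is precisely the reasoning the paper leaves implicit: the paper states the corollary immediately after observing that ``bounding the width of a tree also bounds the number of infinite branches it may have,'' without further proof. You have simply made this standard argument explicit, including the useful remark that binary branching alone is insufficient and that bounded width is doing the work.
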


\showfigure{
\begin{figure}[tb]
\begin{centering}
\subfloat[An automaton]{
\raisebox{0in}{
\begin{tikzpicture} [->,auto,node distance=0.6cm,line width=0.4mm]
\node[state,initial,inner sep=1pt] (q) {$q$};
\node[state,accepting] (p) [right=of q] {$p$};
\path 	(q) edge [loop above] node {a} (q)
			edge [bend left] node {a} (p)
		(p) edge [bend left] node {b} (q)
			edge [loop above] node {a,b} (p);
\end{tikzpicture}
}}
\!\!\!\!\!\!\!\!\!\!\!\!\!\!\!\!\!\!\!\!\!\!\!\!\subfloat[\TGP for automaton (a) on $ab^\omega$.]
{
\begin{tikzpicture} [node distance=0.4cm,line width=0.4mm]
\node[state] (p0) [label=right:
	$\begin{array}{l} h\eq 0\\ \protect\labelsf \eq \{\}\\\protect\gl \eq 0 \end{array}$
	] {${\zug{q,0}}$};

\node[state] (q1) [below=of p0,label=left:
	$\begin{array}{r} h\eq 00\\ \protect\labelsf \eq \{0\}\\\protect\gl \eq 0 \end{array}$
	] {${\zug{q,1}}$}
		edge [] (p0);
\node[state,accepting] (p1) [right=of q1, label=right:
	$\begin{array}{l} h\eq 01\\ \protect\labelsf \eq \{\}\\\protect\gl \eq 1 \end{array}$
	] {${\zug{p,1}}$}
		edge [] (p0);

\node[state] (q2) [below=of p1,label=left:
	$\begin{array}{r} h\eq 010\\ \protect\labelsf \eq \set{0,1}\\\protect\gl \eq 0 \end{array}$
	] {${\zug{q,2}}$}
		edge [] (p1);
\node[state,accepting] (p2) [right=of q2, label=right:
	$\begin{array}{l} h\eq 011\\ \protect\labelsf \eq \{\}\\\protect\gl \eq 2 \end{array}$
	] {${\zug{p,2}}$}
		edge [] (p1);

\node[state] (q3) [below=of p2,label=left:
	$\begin{array}{r} h\eq 0110\\ \protect\labelsf \eq \set{0,1,2}\\\protect\gl \eq 0 \end{array}$
	] {${\zug{q,3}}$}
		edge [] (p2);
\node[state,accepting] (p3) [right=of q3, label=below:
	$~~~~~~~~~\begin{array}{l} h\eq 0111\\ \protect\labelsf \eq \{\}\\\protect\gl \eq 3 \end{array}$
	] {${\zug{p,3}}$}
		edge [] (p2);

\draw[fill] (3,-6.00) circle (0.5mm);
\draw[fill] (3,-6.25) circle (0.5mm);
\draw[fill] (3,-6.50) circle (0.5mm);
\end{tikzpicture}
}
\caption{An automaton and tree of classes. Each class is a singleton set, brackets are omitted for
brevity.  $F$-classes are circled twice.  Each class is labeled with its profile $h$, as well as the
set $\protect\labelsf$ and the global label $\gl$ as defined in Section~\ref{Sect:Global_Labeling}. 
\label{Fig:TGP}}
\end{centering}
\end{figure}
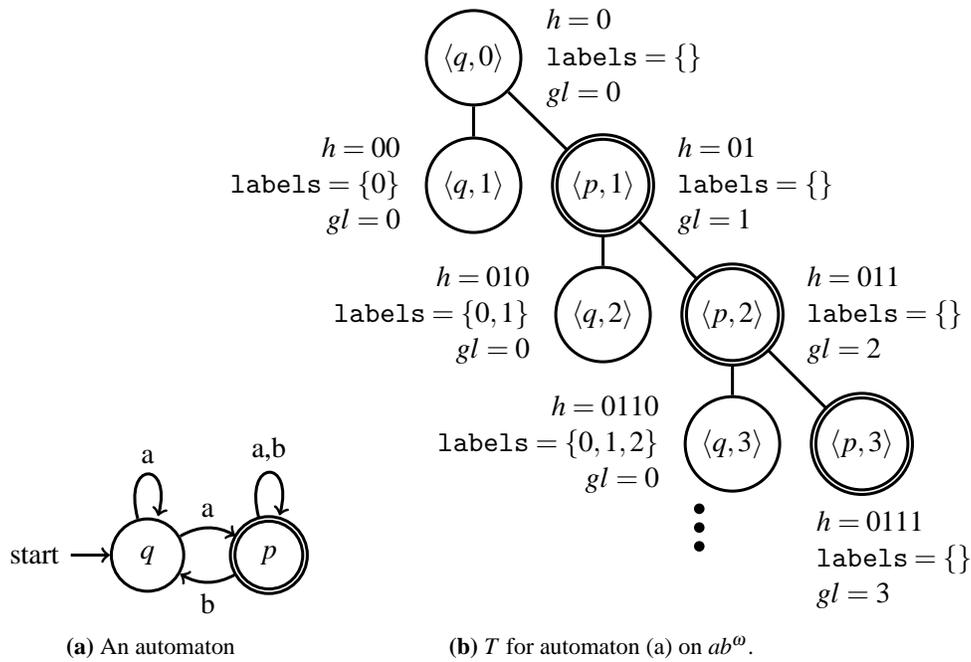
}

\begin{example}
Consider, for example, the NBW in Figure~\ref{Fig:TGP}.(a) and the
first four levels of a tree of equivalence classes in Figure~\ref{Fig:TGP}.(b).  This tree
corresponds to all runs of the NBW on the word $ab^\omega$.  There is only one infinite branch,
$\set{\zug{q,0}},\set{\zug{p,1}},\set{\zug{p,2}},\ldots$, which is accepting.  The set of labels and
the global labeling $\gl$ are explained below, in Section~\ref{Sect:Global_Labeling}.
\end{example}

We conclude this section with Theorem~\ref{TGP_Captures}, which enables us to reduce the search for
an accepting path in $\Gprime$ to a search for an accepting branch in $\TGP$. 

\pagebreak[4]
\begin{restatable}{theorem}{thmTGPCaptures}\label{TGP_Captures}
The profile tree $\TGP$ of an NBW $\A$ on an infinite word $w$ is accepting iff $\A$ accepts $w$.
\end{restatable}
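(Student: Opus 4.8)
The plan is to prove the stronger equivalence that $\TGP$ is accepting if and only if $\Gprime$ is accepting, and then to invoke Theorem~\ref{Lexicographic_Edge_Pruning}, which already equates the latter with acceptance of $w$ by $\A$. This reduces the problem to transferring accepting witnesses between the pruned run \DAG $\Gprime$ and the profile tree $\TGP$ in both directions. Throughout I would exploit a fact established earlier: membership in $F$ is a property of whole classes (all members of a class agree on $F$-membership), so a node is an $F$-node exactly when its class is an $F$-class, and hence $F$-classes along a branch correspond to $F$-nodes along a path.

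For the easy direction, suppose $\Gprime$ is accepting, witnessed by an initial path $v_0, v_1, \ldots$ passing through infinitely many $F$-nodes. I would project this path onto the sequence of classes $[v_0], [v_1], \ldots$. Since $\zug{v_i, v_{i+1}} \in E'$, by the very definition of the edges of $\TGP$ the class $[v_{i+1}]$ is a child of $[v_i]$, so the projected sequence is an initial path in $\TGP$, i.e.\ a branch; consecutive classes lie on consecutive levels and are therefore distinct tree nodes. Because infinitely many $v_i$ are $F$-nodes, infinitely many $[v_i]$ are $F$-classes, so this branch is accepting and $\TGP$ is accepting.

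The hard part will be the converse, lifting an accepting branch back to an accepting path, since a branch records only a sequence of classes and not a coherent path of nodes through them. Given an accepting branch $U_0, U_1, \ldots$ with infinitely many $F$-classes, I would consider the sub-\DAG of $\Gprime$ induced by $\bigcup_i U_i$ and argue by \konig's Lemma that it contains an infinite initial path. The key structural input is that in $\TGP$ every class has a unique parent (Theorem~\ref{TGPBinTree}, via Lemma~\ref{Gprime_Parents_Equivalent}): since $U_i$ is the sole parent class of $U_{i+1}$, every $E'$-predecessor of every node of $U_{i+1}$ lies in $U_i$, and every node of $U_{i+1}$ retains at least one incoming $E'$-edge (recall that an edge is dropped from $E'$ only when another edge to the same destination survives). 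Hence from any node in $U_k$ one can walk backwards, staying inside the classes, to obtain a finite initial path $v_0, \ldots, v_k$ with $v_j \in U_j$; as $U_0$ is the root class and consists of initial nodes, this path starts in $Q^{in} \times \set{0}$. Thus there are initial paths of every finite length confined to the classes. Organizing these finite paths into a tree ordered by the prefix relation yields a finitely branching (each $v_i$ has at most $n$ successors in $\Gprime$) infinite tree, so \konig's Lemma supplies an infinite initial path $v_0, v_1, \ldots$ with $v_i \in U_i$. Since infinitely many $U_i$ are $F$-classes, this path visits infinitely many $F$-nodes, so $\Gprime$ is accepting.

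Combining the two directions gives that $\TGP$ is accepting iff $\Gprime$ is accepting, and Theorem~\ref{Lexicographic_Edge_Pruning} then closes the argument. The only real subtlety, and the step I would scrutinize most carefully, is the backward-walk together with the \konig argument: I must be sure that the unique-parent property forces predecessors to remain within the branch's classes (so the lifted path never wanders out of $\bigcup_i U_i$), and that the bounded width of $\G$ is precisely what makes the prefix tree finitely branching.
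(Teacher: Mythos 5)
Your proposal is correct and follows essentially the same route as the paper: both directions reduce to $\Gprime$ via Theorem~\ref{Lexicographic_Edge_Pruning}, the easy direction projects an accepting path of $\Gprime$ onto its classes, and the converse restricts to the sub-\DAG induced by the branch's classes, uses Lemma~\ref{Gprime_Parents_Equivalent} (plus the fact that no node of $\Gprime$ is orphaned) to keep predecessors inside the classes, and invokes \konig's Lemma with bounded width to extract an infinite initial path. Your backward-walk/prefix-tree formulation is just a more explicit rendering of the paper's ``connected subgraph of bounded degree'' argument.
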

\begin{proof}
If $w \in L(\A)$, then by Theorem~\ref{Lexicographic_Edge_Pruning} we have that $\Gprime$ contains
an accepting path $u_0,u_1,\ldots$. This path gives rise to an accepting branch
$[u_0],[u_1],\ldots$ in $\TGP$. In the other direction, if $\TGP$ has an accepting branch
$U_0,U_1,\ldots$, consider the infinite subgraph of $\Gprime$ consisting only of the nodes in $U_i$, for
$i>0$.  For every $i>0$ there exists $u_i \in U_i$ and $u_{i+1} \in U_{i+1}$ so that
$E'(u_i,u_{i+1})$.  Because no node is orphaned in $\Gprime$, Lemma~\ref{Gprime_Parents_Equivalent}
implies that every node in $U_{i+1}$ has a parent in $U_{i}$, thus this subgraph is connected. As
each node has degree of as most $n$, \konig's Lemma implies that there is an infinite initial path
$u_0,u_1,\ldots$ through this subgraph.  Further, at every level $i$ where $U_i$ is an $F$-class, we
have that $u_i \in F$, and thus this path is accepting and $w \in L(\A)$.
\end{proof}

\section{Labeling}\label{Sect:Labeling}

In this section we present a method of deterministically labeling the classes in $\TGP$ with
integers, so we can determine if $\TGP$ is accepting by examining the labels.  Each label $m$
represents the proposition that the lexicographically minimal infinite branch through the
first class labeled with $m$ is accepting. On each level we give the label $m$ to the
lexicographically minimal descendant, on any branch, of this first class labeled with $m$.  We
initially allow the use of global information about \TGP and an unbounded number of labels. We then
show how to determine the labeling using bounded information about each level of $\TGP$, and how to
use a fixed set of labels.

\subsection{Labeling $\TGP$}\label{Sect:Global_Labeling}
We first present a labeling that uses an unbounded number of labels and global
information about $\TGP$.  We call this labeling the \emph{global labeling}, and denote it with
$\gl$.  For a class $U$ on level $i$ of $\TGP$, and a class $W$ on level $j$, we say that $W$ is
{\em before\/} $U$ if $j < i$ or $j=i$ and $W \prec_i U$. For each label $m$, we refer to the first
class labeled $m$ as $\first(m)$. Formally, $U=\first(m)$ if $U$ is labeled $m$ and, for all classes
$W$ before $U$, the label of $W$ is not $m$.  We define the labeling function $\gl$ inductively over
the nodes of $\TGP$.  For the initial class $U_0=\set{\rzug{q,0}\mid q \in Q^{in}}$ with profile
$0$, let $\gl(U_0)=0$. 

Each label $m$ follows the lexicographically minimal child of $\first(m)$ on every level.  When a
class with label $m$ has two children, we are not certain which, if either, is part of an infinite
branch. We are thus conservative, and follow the non-$F$-child. If the non-$F$-child dies out, we
revise our guess and move to a descendant of the $F$-child.  For a label $m$ and level $i$, let the
\emph{lexicographically minimal descendant} of $m$ on level $i$, written $\lmd(m,i)$, be
$\min_\preceq(\{W\mid W\text{ is a descendant of $\first(m)$ on level $i$}\})$:
the class with the minimal profile among all the descendants of $\first(m)$ on level
$i$.  For a class $U$ on level $i$, define $\labelsf(U)=\set{m \mid U=\lmd(m,i)}$ as the set of
valid labels for $U$. When labelling $U$, if $U$ has more than one valid label, we give it
the smallest label, which corresponds to the earliest ancestor. If $\labelsf{U}$ is empty, $U$ is
given an unused label one greater than the maximum label occurring earlier in $\TGP$.

\begin{definition}\label{Def:Labels}
$\gl(U) = 
\begin{cases}
\min(\labelsf(U)) & \text {if } \labelsf(U) \neq \emptyset,\\
\max(\{\gl(W)\mid W \text{ is before } U\}) +1  & \text{if }\labelsf(U) = \emptyset.\\
\end{cases}
$
\end{definition}

Lemma \ref{Label_Props} demonstrates that every class on a level gets a unique label, and that
despite moving between nephews the labeling adheres to branches in the tree. \fullv{The proof is reserved
for Appendix~\ref{App:Proofs}.}

\begin{restatable}{lemma}{lemLabelProps}\label{Label_Props}
For classes $U$ and $W$ on level $i$ of $\TGP$, it holds that:
\begin{compactenum}
\item\label{L_Distinct} If $U \neq W$ then $\gl(U) \neq \gl(W)$.
\item\label{L_Descends} $U$ is a descendant of $\first(\gl(U))$.
\item\label{L_Preceq} If $U$ is a descendant of $\first(\gl(W))$, then $W \preceq_i U$.
Consequently, if $U \prec_i W$, then $U$ is not a descendant of $\first(\gl(W))$.
\item\label{First_F} $\first(\gl(U))$ is the root or an $F$-class with a sibling.
\item\label{Persistent_Labels} If $U \neq \first(\gl(U))$, then there is a class on level $i-1$
that has label $\gl(U)$.
\item\label{Less_Earlier} If $\gl(U) < \gl(W)$ then $\first(\gl(U))$ is before $\first(\gl(W))$. 
\end{compactenum}
\end{restatable}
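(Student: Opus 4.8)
The plan is to prove all six items simultaneously by induction on the level $i$, processing the classes within a level in increasing $\prec_i$-order so that when I reach a class $U$ I may freely invoke every item for all classes strictly before $U$ (earlier levels, or the same level but $\prec_i$-smaller). Two monotonicity observations carry most of the weight. First, the order of children refines the order of parents: if $X\prec_{i-1}X'$ then every child of $X$ is $\prec_i$ every child of $X'$. This is immediate from Lemma~\ref{Gprime_Captures_Profiles}, since a child's profile is its parent's profile with one bit appended and all classes on level $i-1$ have equal-length profiles, so the parents already differ before the appended bit; in particular the non-$F$-child (profile $h_X0$) precedes the $F$-child (profile $h_X1$). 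Second, fresh labels are issued by the rule $\max+1$, so labels are introduced in strictly increasing order: when a label first appears it exceeds every label used before it. This second fact is exactly item~\ref{Less_Earlier}, and I will also use its cross-level form, that $\first(m')$ is before $\first(m)$ whenever $m'<m$.

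The engine of the argument is a description of how a label migrates down $\TGP$. Writing $X=\lmd(m,i-1)$, I will show that $\lmd(m,i)$ is the $\prec_i$-least child of $X$ whenever $X$ has a child, and otherwise the $\prec_i$-least child of the next $\prec_{i-1}$-smallest descendant of $\first(m)$ that survives to level $i$. This follows directly from the first monotonicity fact, since children of $\prec_{i-1}$-smaller descendants are $\prec_i$-smaller. Combined with the observation that the non-$F$-child precedes the $F$-child, it makes precise the intuition that a label follows the non-$F$-child when its holder splits and only moves onto the $F$-side once the non-$F$-side has died out.

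Items~\ref{L_Descends} and~\ref{L_Preceq} then fall out of the definitions. If $\labelsf(U)\neq\emptyset$ then $U=\lmd(\gl(U),i)$ is by construction a descendant of $\first(\gl(U))$, while if $U$ took a fresh label then $\first(\gl(U))=U$ is a descendant of itself, giving~\ref{L_Descends}; and since $W=\lmd(\gl(W),i)$ is the $\prec_i$-least descendant of $\first(\gl(W))$ on level $i$, any other such descendant $U$ satisfies $W\preceq_i U$, giving~\ref{L_Preceq}. For~\ref{L_Distinct}, suppose $U\prec_i W$ on level $i$ both carry $m$. If $W$ took $m$ fresh then $m$ exceeds every earlier label, in particular $\gl(U)$, a contradiction; if $U$ took $m$ fresh then $\first(m)=U$ lies on level $i$, so $\lmd(m,i)=U$ (its only level-$i$ descendant), and the only class that could inherit $m$ on level $i$ is $U$ itself, forcing $W=U$; and if both inherited $m$ then both equal the unique class $\lmd(m,i)$.

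The substantive work is in items~\ref{First_F} and~\ref{Persistent_Labels}, and I expect~\ref{First_F} to be the main obstacle. For~\ref{First_F} I will show that the only classes receiving a fresh label are the root $U_0$ and the $F$-children of splitting classes. Every class $U'$ is its own $\lmd(\gl(U'),\cdot)$, so its label migrates to its $\prec_i$-least child; when $U'$ has two children this is the non-$F$-child, and the migration lemma forbids any label from landing on the $F$-child $W_F$ (a label reaching $W_F$ would have to be the $\prec_i$-least child of its $\lmd$ on level $i-1$, forcing that $\lmd$ to be $U'$ and hence selecting the non-$F$-child, not $W_F$). Thus $W_F$ has empty $\labelsf$, takes a fresh label, and is an $F$-class whose sibling is the non-$F$-child; the only other fresh label is $\gl(U_0)=0$. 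For~\ref{Persistent_Labels}, if $U\neq\first(\gl(U))$ then $U$ inherited $m=\gl(U)$, so $U=\lmd(m,i)$ is the $\prec_i$-least child of $X=\lmd(m,i-1)$, and it suffices to show $\gl(X)=m$. If instead $\gl(X)=m'<m$, then by~\ref{L_Descends} and the cross-level form of~\ref{Less_Earlier} the classes $\first(m')$ and $\first(m)$ are nested ancestors of $X$ with $\first(m')$ the higher, so descendants of $\first(m)$ are descendants of $\first(m')$ and $X=\lmd(m',i-1)$ as well; the migration lemma then yields $\lmd(m',i)=\lmd(m,i)=U$, whence $m'\in\labelsf(U)$ and $\gl(U)\le m'<m$, a contradiction. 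The delicate points throughout are keeping the induction order honest so that each cited item is already available, and handling the dying-out case of the migration lemma, which is precisely where fresh labels and item~\ref{First_F} are genuinely at issue.
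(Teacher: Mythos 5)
Your decomposition matches the paper's (items (1)--(3) fall out of $U=\lmd(\gl(U),i)$, item (6) out of the $\max+1$ rule, item (4) out of the fact that the $\prec$-least child inherits its parent's label), and those five items are sound. The gap is in item (5), and it sits in exactly the case that item exists to handle: when $X=\lmd(m,i-1)$ has died out. Your argument there is ``$X=\lmd(m',i-1)$ as well; the migration lemma then yields $\lmd(m',i)=\lmd(m,i)=U$.'' But the migration lemma routes each label within its own first class's descendant set: $\lmd(m,i)$ is the least child of the $\prec_{i-1}$-least descendant of $\first(m)$ that still has a child, while $\lmd(m',i)$ is the least child of the $\prec_{i-1}$-least such descendant of $\first(m')$. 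When $X$ has a child, both are the least child of $X$ and your contradiction goes through. When $X$ has no child, the two labels are routed through possibly different classes: every descendant of $\first(m)$ is a descendant of $\first(m')$ but not conversely, and nothing you have stated excludes that the least surviving descendant of $\first(m')$ is some class $Z$ that is \emph{not} a descendant of $\first(m)$ --- in which case $\lmd(m',i)$ is a child of $Z$, not $U$, and no contradiction follows. Your opening claim in item (5), that $U$ is the $\prec_i$-least child of $X$, fails in the same case: in Figure~\ref{Fig:TGP}, $\lmd(0,2)=\set{\zug{q,2}}$ is not a child of $\lmd(0,1)=\set{\zug{q,1}}$, which dies out.

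What is missing is a divergence property that is the converse of your monotonicity fact, and it is precisely the step the paper's proof supplies. If $Z$ is a descendant of $\first(m')$ but not of $\first(m)$, then $Z$'s ancestor $Y$ at the level of $\first(m)$ is a different class, hence has a different profile, so any comparison between $Z$ (or its descendants) and descendants of $\first(m)$ is already decided by $h_Y$ versus $h_{\first(m)}$: $Z$ lies either strictly below \emph{all} descendants of $\first(m)$ on its level, or strictly above all of them. The first alternative contradicts $X\preceq_{i-1}Z$ (minimality of $X=\lmd(m',i-1)$ among descendants of $\first(m')$); the second contradicts the minimality of $Z$ among surviving descendants of $\first(m')$, since the least surviving descendant of $\first(m)$ is also a surviving descendant of $\first(m')$. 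Hence $Z$ must be a descendant of $\first(m)$, the two routing classes coincide, and your argument closes. This is exactly the content of the paper's key step --- exhibiting the parent $W$ of $\lmd(m',i)$ and showing it is lexicographically smaller than \emph{every} descendant of $\first(m)$ on level $i-1$, not merely smaller than the parent of $U$ --- and it does not follow from the migration lemma plus child-monotonicity alone. Note also that your closing remark locates the delicacy of the dying-out case in item (4) and fresh labels; item (4) is in fact fine (there the parent has a child by hypothesis), and the delicacy is here.
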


As stated above, the label $m$ represents the proposition that the lexicographically minimal
\emph{infinite} branch going through $\first(m)$ is accepting. Every time we pass through an
$F$-child, this is evidence towards this proposition.  Recall that when a class with label $m$ has
two children, we initially follow the non-$F$-child. If the non-$F$-child dies out, we revise our
guess and move to a descendant of the $F$-child.  Thus revising our guess indicates that at an
earlier point the branch did visit an $F$-child, and also provides evidence towards this
proposition.  Formally, we say that a label $m$ is \emph{successful on level $i$} if there is a
class $U$ on level $i-1$ and a class $U'$ on level $i$ such that $\gl(U)=\gl(U')=m$, and either $U'$
is the $F$-child of $U$, or $U'$ is not a child of $U$ at all.

\begin{example}
In Figure \ref{Fig:TGP}.(b), the only infinite branch
$\set{\zug{q,0}},\set{\zug{p,1}},\ldots$ is accepting.  At level $0$ this branch is
labeled with $0$. At each level $i>0$, we conservatively assume that the infinite branch beginning
with $\zug{q,0}$ goes through $\set{\zug{q,i}}$, and thus label $\set{\zug{q,i}}$ by $0$. As
$\set{\zug{q,i}}$ is proven finite on level $i+1$, we revise our assumption and continue to follow
the path through $\set{\zug{p,i}}$.  Since $\{\zug{p,i}\}$ is an $F$-class, the label $0$ is
successful on every level $i+1$.  Although the infinite branch is not labeled $0$ after the first
level, the label $0$ asymptotically approaches the infinite branch, checking along the way that the
branch is lexicographically minimal among the infinite branches through the root.
\end{example}

Theorem~\ref{Labeling_Succeeds} demonstrates that the global labeling captures the accepting or
rejecting nature of $\TGP$. Intuitively, at each level the class $U$ with label $m$ is on the
lexicographically minimal branch from $\first(m)$. If $U$ is on the lexicographically minimal
\emph{infinite} branch from $\first(m)$, the label $m$  is waiting for the branch to next reach an
$F$-class. If $U$ is not on the lexicographically minimal infinite  branch from $\first(m)$, then
$U$ is finite and $m$ is waiting for $U$ to die out. 

\begin{restatable}{theorem}{thmLabelingSucceeds}\label{Labeling_Succeeds}
A profile tree $\TGP$ is accepting iff there is a label $m$ that is successful infinitely often.
\end{restatable}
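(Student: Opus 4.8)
The plan is to prove both directions of the biconditional, establishing a correspondence between successful labels that recur infinitely often and the existence of an accepting branch in $\TGP$. The key conceptual bridge is the intuition stated before the theorem: the label $m$ tracks the lexicographically minimal infinite branch descending from $\first(m)$, and being ``successful'' records evidence that this branch visits $F$-classes.

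For the forward direction (if $\TGP$ is accepting, some label is successful infinitely often), I would start with an accepting branch $b = U_0, U_1, \ldots$, and among all accepting branches pick the lexicographically minimal one using the total order $\prec$ on infinite branches (Corollary~\ref{TGP_Finite} guarantees only finitely many infinite branches exist, so a minimal accepting branch is well-defined). I would identify the critical label: let $m$ be the label of the class $\first(m)$ that is the deepest common ancestor beyond which $b$ is the lexicographically minimal infinite branch through $\first(m)$. The goal is to show that from some level onward, the class carrying label $m$ tracks $b$, and that $m$ becomes successful each time $b$ passes through an $F$-class. Since $b$ is accepting it has infinitely many $F$-classes, so $m$ is successful infinitely often. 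The technical work here is using Lemma~\ref{Label_Props}, particularly parts~\ref{L_Descends} and~\ref{L_Preceq}, to argue that the label $m$ stays on the minimal descendant, and that whenever $b$ hits an $F$-class the label either follows an $F$-child or has just revised its guess (moving off a finite class that died out), both of which make $m$ successful.

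For the reverse direction (if some label $m$ is successful infinitely often, then $\TGP$ is accepting), I would consider the sequence of classes $U_0, U_1, \ldots$ carrying label $m$ at each level where $m$ occurs, starting from $\first(m)$. Each success at level $i$ means either the label followed an $F$-child (so $U_i$ is an $F$-class) or the label moved to a non-child, which by the labeling rule happens only when the previous minimal descendant died out and the label jumped to a descendant of the $F$-child of some ancestor. In both cases I would extract a genuine $F$-class visit on the branch structure descending from $\first(m)$. The plan is to collect these witnesses and build an infinite branch through $\first(m)$ that passes through infinitely many $F$-classes, invoking \konig's Lemma on the subtree of descendants of $\first(m)$ (which has bounded width by Theorem~\ref{TGPBinTree}) to extract an infinite branch, then arguing this branch is accepting.

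The main obstacle is the reverse direction: the label $m$ does not literally follow a single branch, since it ``jumps'' between nephews when classes die out, so infinitely many successes of $m$ do not immediately exhibit one branch with infinitely many $F$-classes. The delicate point is that a success via revision (moving to a non-child) reflects an $F$-visit at an earlier, possibly distant, level. I expect to need a careful argument that the lexicographically minimal infinite branch through $\first(m)$ exists and absorbs these scattered $F$-visits: since there are only finitely many infinite branches (Corollary~\ref{TGP_Finite}), all but finitely many descendants of $\first(m)$ lie on finite sub-branches that eventually die out, so the label's guesses stabilize onto the genuine infinite branches, and the infinitely many successes must ultimately be attributed to $F$-classes on the minimal infinite branch. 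Pinning down this stabilization and ruling out that the successes are spread across ever-deeper finite detours is where the real care is required.
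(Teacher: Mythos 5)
Your forward direction (accepting implies some label is successful infinitely often) follows, in outline, the paper's own route: take the lexicographically minimal accepting branch $b$, locate the class on $b$ after which no smaller infinite branch remains, observe that this class is $\first(m)$ for a fresh label $m$, and then manufacture one success per $F$-class of $b$ --- either because the label follows an $F$-child outright, or because the finite non-$F$ detour it conservatively takes eventually dies out and the label jumps back. You do gloss the step showing that $m$ actually occurs on every later level (one must rule out a smaller label $m'$ claiming the minimal descendant, via Lemma~\ref{Label_Props}.\ref{Less_Earlier} and Lemma~\ref{Label_Props}.\ref{Persistent_Labels}), but the skeleton matches.

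The reverse direction is where you have a genuine gap, and it is exactly the point you flag yourself. \konig's Lemma applied to the descendants of $\first(m)$ yields \emph{some} infinite branch, but gives no control over whether that branch contains infinitely many $F$-classes: a success by revision witnesses an $F$-class at the divergence level between the dead class and the label's new position, and these witnesses can a priori lie on pairwise incomparable finite detours, none of which the \konig branch need pass through. Your fallback picture --- ``the label's guesses stabilize onto the genuine infinite branches'' --- is also not quite right: in the accepting case the label never stabilizes onto $b$; it keeps leaving $b$ for non-$F$ detours and returning, and it is precisely in the \emph{rejecting} case that it stabilizes. That observation is the missing idea, and the paper exploits it by contradiction rather than by construction. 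Let $b$ be the lexicographically minimal infinite branch through $\first(m)$ (it exists since $m$ occurs infinitely often), and suppose $b$ had a rejecting suffix from level $j$ on. Then: (i) after level $j$ no new descendants of $\first(m)$ lexicographically smaller than $b$'s classes can arise, because $b$ keeps taking non-$F$-children, which are lexicographically minimal among children; (ii) the finitely many smaller descendants present at level $j$ are all finite, by minimality of $b$, and die out by some level $j'$; (iii) at level $j'$ the label $m$ must sit on the class of $b$, since otherwise, by Lemma~\ref{Label_Props}.\ref{Persistent_Labels}, $m$ never occurs again; and (iv) from $j'$ on, the class labeled $m$ is always the non-$F$-child of the previous one, so $m$ is never successful again --- contradicting infinitely many successes. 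Without this ``the label gets stuck'' argument (or an equivalent), your reverse direction does not go through.
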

\begin{proof}
In one direction, assume there is a label $m$ that is successful
infinitely often. The label $m$ can be successful only when it occurs, and thus
$m$ occurs infinitely often, $\first(m)$ has infinitely many descendants, and
there is at least one infinite branch through $\first(m)$.  Let
$b=U_0,U_1,\ldots$ be the lexicographically minimal infinite branch that goes
through $\first(m)$.  We demonstrate that $b$ cannot have a suffix consisting
solely of non-$F$-classes, and therefore is an accepting branch.  By way of
contradiction, assume there is an index $j$ so that for every $k > j$, the
class $U_k$ is a non-$F$-class. By Lemma~\ref{Label_Props}.\ref{First_F},
$\first(m)$ is an $F$-class or the root and thus occurs before level $j$.

Let $\U=\set{W\mid W \prec_j U_j,~W\text{ is a descendant of $\first(m)$}}$ be
the set of descendants of $\first(m)$, on level $j$, that are lexicographically
smaller than $U_j$. Since $b$ is the lexicographically minimal infinite branch
through $\first(m)$, every class in $\U$ must be finite. Let $j' \geq j$ be the
level at which the last class in $\U$ dies out. At this point, $U_{j'}$ is the
lexicographically minimal descendant of $\first(m)$.  If $\gl(U_{j'}) \neq m$,
then there is no class on level $j'$ with label $m$, and, by Lemma
\ref{Label_Props}.\ref{Persistent_Labels}, $m$ would not occur after level $j'$.
Since $m$ occurs infinitely often, it must be that $\gl(U_{j'})=m$.  On every
level $k > j'$, the class $U_k$ is a non-$F$-child, and thus $U_k$
is the lexicographically minimal descendant of $U_{j'}$ on level $k$ and
so $\gl(U_k)=m$. This entails $m$ cannot be not successful after level $j'$,
and we have reached a contradiction.  Therefore, there is no such rejecting suffix of
$b$, and $b$ must be an accepting branch.

In the other direction, if there is an infinite accepting branch, then let
$b=U_0,U_1,\ldots$ be the lexicographically minimal infinite accepting branch.
Let $B'$ be the set of infinite branches that are lexicographically smaller
than $b$. Every branch in $B'$ must be rejecting, or $b$ would not be the
minimal infinite accepting branch.  Let $j$ be the first
index after which the last branch in $B'$ splits from $b$. Note that either
$j=0$, or $U_{j-1}$ is part of an infinite rejecting branch
$U_0,\ldots,U_{j-1},W_j,W_{j+1},\ldots$ smaller than $b$.  In both cases, we
show that $U_j$ is the first class for a new label $m$ that occurs on every
level $k>j$ of $\TGP$.

If $j=0$, then let $m=0$. As $m$ is the smallest label, and
there is a descendant of $U_j$ on every level of $\TGP$, it holds that $m$ will
occur on every level.  In the second case, where $j>0$, then $W_j$
must be the non-$F$-child of $U_{j-1}$, and so $U_j$ is the $F$-child.  Thus,
$U_j$ is given a new label $m$ where $U_j=\first(m)$.  For every label $m' < m$ and
level $k>j$, since for every descendant $U'$ of $U_j$ it holds that $W_k
\preceq_k U'$, it cannot be that $\lmd(m',k)$ is a descendant of $U_j$.  Thus,
on every level $k>j$, the lexicographically minimal descendant of $U_j$ will be
labeled $m$, and $m$ occurs on every level of $\TGP$.

We show that $m$ is successful infinitely often by defining an infinite
sequence of levels, $j_0,j_1,j_2,\ldots$ so that $m$ is successful on
$j_i$ for all $i>0$. As a base case, let $j_0=j$.  Inductively, at level $j_i$,
let $U'$ be the class on level $j_i$ labeled with $m$.  We have two cases. If
$U' \neq U_{j_i}$, then as all infinite branches smaller than $b$ have already
split from $b$, $U'$ must be finite in $\TGP$.  Let $j_{i+1}$ be the level at
which $U'$ dies out. At level $j_{i+1}$, $m$ will return to a descendant of
$U_{j_0}$, and $m$ will be successful. In the second case, $U'=U_{j_i}$.
Take the first $k>j_i$ so that $U_k$ is an $F$-class. As $b$ is an accepting
branch, such a $k$ must exist. As every class between $U_j$ and $U_k$ is a non-$F$-class,
$\gl(U_{k-1})=m$. If $U_k$ is the only child of $U_{k-1}$ then let $j_{i+1}=k$: since $\gl(U_k)=m$
and $U_k$ is not the non-$F$-child of $U_{k-1}$, it holds that $m$ is successful on level $k$.
Otherwise let $U'_k$ be the non-$F$-child of $U_{k-1}$, so that $\gl(U'_k)=m$. Again, $U'_k$ is
finite. Let $j_{i+1}$ be the level at which $U'_k$ dies out.  At level $j_{i+1}$, the label $m$ will
return to a descendant of $U_k$, and $m$ will be successful.
  \end{proof}

\subsection{Determining Lexicographically Minimal Descendants}
Recall that the definition of the labeling $\gl$ involves the computation of $\lmd(m,i)$, the class
with the minimal profile among all the descendants of $\first(m)$ on level $i$.  Finding $\lmd(m,i)$
requires knowing the descendants of $\first(m)$ on level $i$. We show how to store this information
with a partial order, denoted $\lreq{i}$, over classes that tracks which classes are minimal cousins of other classes.
Using this partial order, we can determine the class $\lmd(m,i+1)$ for every
label $m$ that occurs on level $i$, using only information about levels $i$ and
$i+1$ of $\TGP$. Lemma \ref{Label_Props}.\ref{Persistent_Labels} implies that
we can safely restrict ourselves to labels that occur on level $i$.

\begin{definition}\label{Def:Min_Cousin}
For two classes $U$ and $W$ on level $i$ of $\TGP$, say that $U$ is a \emph{minimal cousin} of $W$, written
$U \lreq{i} W$, iff $W$ is a descendant of $\first(\gl(U))$.  Say $U \lr{i} W$ when $U \lreq{i} W$
and $U \neq W$.
\end{definition}

For a label $m$ and level $i$, we can determine $\lmd(m,i+1)$ given only the classes on levels $i$
and $i+1$ and the partial order $\lr{i}$. Let $U$ be a class $U$ on level $i$.  Because labels can
move between branches, the minimal descendant of $\first(\gl(U))$ on level $i+1$ may be a nephew of
$U$, not necessarily a direct descendant.  Define the $\lreq{i}$-nephew of $U$ as
$\lsf_i(U)=\min_{\preceq_{i+1}}(\set{W'\mid W\text{ is the parent of }W'\text{ and } U \lreq{i}
W})$.

\begin{restatable}{lemma}{lemLRProvidesLMD}\label{LR_Provides_LMD}
For a class $U$ on level $i$ of $\TGP$, it holds that $\lmd(\gl(U),i+1) = \lsf_i(U)$.
\end{restatable}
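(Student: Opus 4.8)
The plan is to unfold both definitions and observe that $\lmd(\gl(U),i+1)$ and $\lsf_i(U)$ are the $\preceq_{i+1}$-minimum of one and the same set of classes. Write $m = \gl(U)$. First I would record that $\first(m)$ lies on a level no greater than $i$: by Lemma~\ref{Label_Props}.\ref{L_Descends}, $U$ is a descendant of $\first(m)$, so $\first(m)$ is an ancestor of the level-$i$ class $U$ and hence occurs on some level $j \leq i$.

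Next I would identify the index set underlying $\lsf_i(U)$. By Definition~\ref{Def:Min_Cousin}, $U \lreq{i} W$ holds exactly when $W$ is a descendant of $\first(\gl(U)) = \first(m)$. Thus the classes $W$ on level $i$ with $U \lreq{i} W$ are precisely the descendants of $\first(m)$ on level $i$, and the set whose $\preceq_{i+1}$-minimum defines $\lsf_i(U)$ is exactly the set of children of these level-$i$ descendants of $\first(m)$.

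The key step is to show that this set of children coincides with the set of descendants of $\first(m)$ on level $i+1$, which is the set whose minimum defines $\lmd(m,i+1)$. Since $\TGP$ is a tree (Theorem~\ref{TGPBinTree}), every class on level $i+1$ has a unique parent on level $i$. On the one hand, any child of a level-$i$ descendant of $\first(m)$ is itself a descendant of $\first(m)$. On the other hand, any descendant $W'$ of $\first(m)$ on level $i+1$ is distinct from $\first(m)$, as $\first(m)$ sits on level $j \leq i$; hence the path witnessing this descent is nonempty and passes through the parent of $W'$, which is therefore a descendant of $\first(m)$ on level $i$. The two sets thus coincide.

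Finally, both $\lmd(m,i+1)$ and $\lsf_i(U)$ are defined as the minimum, under $\preceq_{i+1}$, of this common set of level-$(i+1)$ classes; since $\preceq_{i+1}$ is a total order on the classes of level $i+1$, this minimum is the unique smallest class, and the two quantities are equal. I do not expect a genuine obstacle here: the lemma is essentially a bookkeeping identity, and the only point needing care is the tree-structure argument replacing ``descendants on level $i+1$'' by ``children of descendants on level $i$,'' for which the uniqueness of parents in $\TGP$ and the placement of $\first(m)$ on a level at most $i$ are precisely what is required.
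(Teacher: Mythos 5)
Your proposal is correct and takes essentially the same approach as the paper's own proof: both establish that the set $\set{W' \mid W \text{ is the parent of } W' \text{ and } U \lreq{i} W}$ coincides with the set of descendants of $\first(\gl(U))$ on level $i+1$, and then observe that $\lsf_i(U)$ and $\lmd(\gl(U),i+1)$ are the $\preceq_{i+1}$-minimum of this common set. The only cosmetic difference is that you explicitly invoke Lemma~\ref{Label_Props}.\ref{L_Descends} to place $\first(\gl(U))$ on a level at most $i$, where the paper compresses this into the remark that $\gl(U)$ exists on level $i$.
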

\begin{proof}
We prove that $\set{W' \mid W \text{ is the parent of }W'\text{ and } U \lreq{i} W}$ contains every
descendant of $\first(\gl(U))$ on level $i+1$, and thus that its minimal element 
is $\lmd(\gl(U),i+1)$.  Let $W'$ be a class on level $i+1$, with parent $W$ on level
$i$.  If $U \lreq{i} W$, then $W$ is a descendant of $\first(\gl(U))$ and $W'$ is likewise a
descendant of $\first(\gl(U))$.  Conversely, as $\gl(U)$ exists on level $i$, if $W'$ is a descendant of
$\first(\gl(U))$, then its parent $W$ must also be a descendant of $\first(\gl(U))$ and $U \lreq{i}
W$. 
\end{proof}

By using $\lsf_i$, we can in turn define the set of valid labels for a class $U'$ on level $i+1$.
Formally, define the $\lreq{i}$-uncles of $U'$ as $\lpf_i(U')= \set{U \mid U'= \lsf_i(U)}$.  Lemma
\ref{LPF_Determines_Labels} demonstrates how $\lpf_i$ corresponds to $\labelsf$.

\begin{restatable}{lemma}{lemLPFDeterminesLabels}\label{LPF_Determines_Labels}
Consider a class $U'$ on level $i+1$. The following hold:
\begin{compactenum}
\item $\labelsf(U') \cap \set{\gl(W) \mid W \text{ on level $i$}}  = \set{\gl(U) \mid U \in \lpf_i(U')}$.
\item $\labelsf(U')=\emptyset$ iff $\lpf_i(U')=\emptyset$.
\end{compactenum}
\end{restatable}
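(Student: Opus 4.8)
The plan is to prove part (1) directly from Lemma~\ref{LR_Provides_LMD} and the per-level uniqueness of labels, and then to obtain part (2) almost as a corollary, with the only genuine work being to show that the \emph{smallest} valid label of $U'$ already occurs on level $i$. Both parts are ultimately bookkeeping around the single identity $\lmd(\gl(U),i+1) = \lsf_i(U)$.

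For part (1) I would argue the two inclusions separately. For $\supseteq$, take $U \in \lpf_i(U')$, so $U$ is a class on level $i$ with $U' = \lsf_i(U)$; Lemma~\ref{LR_Provides_LMD} gives $\lmd(\gl(U),i+1) = \lsf_i(U) = U'$, whence $\gl(U) \in \labelsf(U')$, and $\gl(U)$ trivially occurs on level $i$. For $\subseteq$, take $m \in \labelsf(U')$ with $m = \gl(W)$ for some class $W$ on level $i$ (unique by Lemma~\ref{Label_Props}.\ref{L_Distinct}). Then $U' = \lmd(m,i+1) = \lmd(\gl(W),i+1)$, and applying Lemma~\ref{LR_Provides_LMD} once more gives $U' = \lsf_i(W)$, i.e.\ $W \in \lpf_i(U')$ and $m = \gl(W) \in \set{\gl(U)\mid U \in \lpf_i(U')}$. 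This is a short chain of equivalences.

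For part (2), the direction $\lpf_i(U') \neq \emptyset \Rightarrow \labelsf(U') \neq \emptyset$ is immediate: any $U \in \lpf_i(U')$ yields $\gl(U) \in \labelsf(U')$ exactly as in the $\supseteq$ inclusion above. The converse is where the work lies. Given $\labelsf(U') \neq \emptyset$, I would set $m_0 = \min(\labelsf(U'))$, which by Definition~\ref{Def:Labels} is exactly $\gl(U')$. The key point is that, because $\labelsf(U') \neq \emptyset$, the label assigned to $U'$ is a \emph{reused} label and not a freshly minted one, so $U' \neq \first(m_0)$. Lemma~\ref{Label_Props}.\ref{Persistent_Labels} then forces $m_0$ to occur on level $i$, placing $m_0$ into $\labelsf(U') \cap \set{\gl(W)\mid W\text{ on level }i}$; part (1) now delivers $\lpf_i(U') \neq \emptyset$.

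The main obstacle is precisely this last step, and the reason part (1) alone is insufficient: a priori $\labelsf(U')$ could contain labels that never appear on level $i$ (a label can be dominated by a smaller one on an intermediate level), whereas $\lpf_i$ only ever sees level-$i$ classes. The resolution is that the \emph{minimal} element of $\labelsf(U')$ cannot be such a ``missing'' label, since a new label is created only in the empty case of Definition~\ref{Def:Labels}; a nonempty $\labelsf(U')$ therefore means $\gl(U') = \min(\labelsf(U'))$ is inherited rather than created, so $U'$ is not the first occurrence of it and Lemma~\ref{Label_Props}.\ref{Persistent_Labels} applies. I would take care to justify $U' \neq \first(m_0)$ cleanly, as it is the linchpin connecting the global labeling to the purely local notion of uncles.
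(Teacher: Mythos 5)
Your proposal is correct and takes essentially the same route as the paper: part (1) is the same chain of equivalences through Lemma~\ref{LR_Provides_LMD}, and part (2) handles the nontrivial direction by applying Lemma~\ref{Label_Props}.\ref{Persistent_Labels} to $\min(\labelsf(U'))$, exactly as the paper does. Your explicit justification that $U' \neq \first(\gl(U'))$ (because a nonempty $\labelsf(U')$ means the label is inherited, not freshly created) simply spells out a step the paper leaves implicit.
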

\begin{proof}\ 
\begin{compactenum}
\item Let $U$ be a class on level $i$.  By definition, $\gl(U) \in \labelsf(U')$ iff
$U'=\lmd(\gl(U),i+1)$.  By Lemma~\ref{LR_Provides_LMD}, it holds that $\lmd(\gl(U),i+1)=\lsf_i(U)$.
By the definition of $\lpf_i$,  we have that $U'= \lsf_i(U)$ iff $U \in \lpf_i(U')$.  Thus every
label in $\labelsf(U')$ that occurs on level $i$ labels some node in $\lpf_i(U')$.
\item If $\lpf_i(U') \neq \emptyset$, then part (1) implies $\labelsf(U') \neq \emptyset$. In other
direction, let $m=\min(\labelsf(U'))$. By Lemma \ref{Label_Props}.\ref{Persistent_Labels}, there is a $U$
on level $i$ so that $\gl(U)=m$, and by part (1) $U \in \lpf_i(U')$. 
\end{compactenum}
\end{proof}

Finally, we demonstrate how to compute $\lreq{i+1}$ only using information about the level $i$ of
$\TGP$ and the labeling for level $i+1$. As the labeling depends only on $\lreq{i}$, this removes
the final piece of global information used in defining $\gl$.

\begin{restatable}{lemma}{lemLMDSuccessor}\label{LMD_Successor}
Let $U'$ and $W'$ be two classes on level $i+1$ of $\TGP$, where $U' \neq W'$. Let $W$ be the parent
of $W'$. We have that $U' \lreq{i+1} W'$ iff there exists a class $U$ on level $i$ so that
$\gl(U) = \gl(U')$ and $U \lreq{i} W$. 
\end{restatable}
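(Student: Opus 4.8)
The plan is to unfold both occurrences of the minimal-cousin relation through Definition~\ref{Def:Min_Cousin} and reduce the biconditional to a statement purely about the descendant relation. Writing $m = \gl(U')$, the left-hand side $U' \lreq{i+1} W'$ says exactly that $W'$ is a descendant of $\first(m)$, while the right-hand side asserts that $m$ occurs at some class $U$ on level $i$ and that $W$ is a descendant of $\first(\gl(U)) = \first(m)$. Thus the lemma becomes: $W'$ is a descendant of $\first(m)$ iff $m$ occurs on level $i$ and $W$ is a descendant of $\first(m)$. Since $\TGP$ is a tree (Theorem~\ref{TGPBinTree}), each class has a unique parent and hence a unique path from the root, a fact I will use repeatedly.

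The backward direction is immediate. If $m$ occurs on level $i$ and $W$ is a descendant of $\first(m)$, then, as $W'$ is a child of $W$, the path from $\first(m)$ to $W$ extends by one edge to $W'$, so $W'$ is a descendant of $\first(m)$; that is, $U' \lreq{i+1} W'$.

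The forward direction is where the work lies. Assume $W'$ is a descendant of $\first(m)$. First I would rule out $W' = \first(m)$: if they were equal, $W'$ would carry the label $m$, but $U'$ also carries $m = \gl(U')$, and $U' \neq W'$ lie on the same level $i+1$, contradicting the uniqueness of labels on a level (Lemma~\ref{Label_Props}.\ref{L_Distinct}). Hence $\first(m)$ is a proper ancestor of $W'$ and therefore sits on a level at most $i$. Because $W$ is the parent of $W'$ and the unique path from $\first(m)$ to $W'$ passes through $W$, the class $W$ is itself a descendant of $\first(m)$, giving the second conjunct of the right-hand side. It remains to produce the witness $U$: since $\first(m)$ lies on a level at most $i$ while $U'$ is on level $i+1$, we have $U' \neq \first(m) = \first(\gl(U'))$, so Lemma~\ref{Label_Props}.\ref{Persistent_Labels} supplies a class $U$ on level $i$ with $\gl(U) = m = \gl(U')$. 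This $U$ satisfies $U \lreq{i} W$, since $W$ is a descendant of $\first(\gl(U)) = \first(m)$, completing the direction.

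The main obstacle is the bookkeeping forced by the convention that a class counts as a descendant of itself. The only genuine subtlety is establishing $W' \neq \first(m)$: this is what pins $\first(m)$ to a level at most $i$, and in turn it is exactly what both lets $W$ inherit the descendant property from $W'$ and lets Lemma~\ref{Label_Props}.\ref{Persistent_Labels} apply to $U'$. Everything else is a direct unfolding of the definitions of $\lreq{i}$ and $\lreq{i+1}$.
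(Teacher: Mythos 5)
Your proof is correct and takes essentially the same route as the paper's: both unfold Definition~\ref{Def:Min_Cousin}, transfer descent from $\first(\gl(U'))$ between $W'$ and its parent $W$ via the tree structure, and invoke Lemma~\ref{Label_Props}.\ref{Persistent_Labels} to connect occurrence of the label $\gl(U')$ on level $i$ with $U' \neq \first(\gl(U'))$. The only difference is organizational: the paper cases on whether $\gl(U')$ occurs on level $i$ (which makes the level bound on $\first(\gl(U'))$ immediate), while you argue the biconditional directionwise and use Lemma~\ref{Label_Props}.\ref{L_Distinct} to rule out $W' = \first(\gl(U'))$, a step the paper's case split avoids needing explicitly.
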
 
\begin{proof}
If there is no class $U$ on level $i$ so that $\gl(U) = \gl(U')$, then $U'=\first(\gl(U'))$. Since
$W'$ is not a descendant of $U'$, it cannot be that $U' \lreq{i+1} W'$.  If such a class $U$ exists,
then $U \lreq{i} W$ iff $W$ is a descendant of $\first(\gl(U))$, which is true iff $W'$ is a
descendant of $\first(\gl(U'))$: the definition of $U' \lreq{i+1} W'$.
\end{proof}

\subsection{Reusing Labels}\label{Sect:Rabin}

As defined, the labeling function $\gl$ uses an unbounded number of labels. However, as there are at
most $\abs{Q}$ classes on a level, there are at most $\abs{Q}$ labels in use on a level. We can thus use a
fixed set of labels by reusing dead labels. For convenience, we use $2\abs{Q}$ labels, so that we
never need reuse a label that was in use on the previous level. \fullv{We demonstrate how to use $\abs{Q}-1$
labels in Appendix~\ref{App:Constructions}.}{The full version demonstrates how to use $\abs{Q}-1$
labels.}  There are two barriers to reusing labelings.  First, we
can no longer take the numerically minimal element of $\labelsf(U)$ as the label of $U$. Instead, we
calculate which label is the oldest through $\preceq$.  Second, we must ensure that a label that is
good infinitely often is not reused infinitely often. To do this, we introduce a Rabin condition to
reset each label before we reuse it.

We inductively define a sequence of labelings, $l_i$, each from the $i$th level of $\TGP$ to
$\{0, \ldots, 2\abs{Q}\}$.  As a base case, there is only one equivalence class $U$ on level $0$ of
$\TGP$\!, and define $l_0(U)=0$.  Inductively, given the set of classes ${\U}_i$ on level $i$, the
function $l_i$, and the set of classes ${\U}_{i+1}$ on level $i+1$, we define $l_{i+1}$ as follows.
Define the set of unused labels $\mathrm{FL}(l_i)$ to be $\set{m\mid m\text{ is not in the range
of $l_i$}}$. As $\TGP$ has bounded width $\abs{Q}$, we have that  $\abs{Q} \leq
\abs{\mathrm{FL}(l_i)}$. Let $\fl_{i+1}$ be the $\zug{\preceq_{i+1}, <}$-minjection from $\set{U'
\text { on level i+1} \mid \lpf_i(U') =\emptyset}$ to $\mathrm{FL}(l_i)$.  Finally, define the
labeling $l_{i+1}$ as

$$l_{i+1}(U') = 
\begin{cases}
l_i(\min_{\preceq_i}(\lpf_i(U'))) & \text {if } \lpf_i(U') \neq \emptyset,\\
\fl_{i+1}(U') & \text{if }\lpf_i(U') = \emptyset.\\
\end{cases}
$$

Because we are reusing labels, we need to ensure that a label that is good infinitely often is not
reused infinitely often. Say that a label $m$ is \emph{bad in $l_i$} if $m \not \in
\mathrm{FL}(l_{i-1})$, but $m \in \mathrm{FL}(l_i)$.  We say that a label $m$ is \emph{good in
$l_i$} if there is a class $U$ on level $i-1$ and a class $U'$ on level $i$ such that
$l_{i-1}(U)=l_i(U')=m$ and $U'$ is either the $F$-child of $U$ or is not a child of $U$ at all.

Theorem~\ref{Rabin_Labeling_Succeeds} demonstrates that the Rabin condition of a label being good
infinitely often, but bad only finitely often, is a necessary and sufficient condition to $\TGP$
being accepting. The proof\fullv{, given in Appendix~\ref{App:Proofs},}{, ommitted for brevity,} associates each label $m$ in $\gl$
with the label $l_i(\first(m))$. 

\begin{restatable}{theorem}{thmRabinLabelingSucceeds}\label{Rabin_Labeling_Succeeds}
A profile tree $\TGP$ is accepting iff there is a label $m$ where $\set{i \mid m\text{ is bad in
}l_i}$ is finite, and $\set{i \mid m\text{ is good in }l_i}$ is infinite.
\end{restatable}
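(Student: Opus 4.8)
The plan is to reduce Theorem~\ref{Rabin_Labeling_Succeeds} to Theorem~\ref{Labeling_Succeeds} by showing that the reused labeling $l_i$ tracks exactly the same \emph{flows} of labels through $\TGP$ as the global labeling $\gl$, so that the Rabin condition on $l_i$ mirrors the property ``some label is successful infinitely often.'' Concretely, for a global label $m$ with $\first(m)$ on level $i_m$, I associate to $m$ the reused value $c_m = l_{i_m}(\first(m))$, and I will prove that on every level $i \ge i_m$ on which $m$ still occurs, the class carrying $\gl$-label $m$, namely $\lmd(m,i)$, also carries $l_i$-value $c_m$.

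First I would establish the structural alignment of the two labelings. The crucial point is that both resolve merges by keeping the \emph{oldest} flow: $\gl$ takes $\min(\labelsf(U'))$, the numerically least (hence, by Lemma~\ref{Label_Props}.\ref{Less_Earlier}, earliest-born) valid label, while $l_{i+1}$ copies $l_i(\min_{\preceq_i}(\lpf_i(U')))$, the $\preceq_i$-least uncle. I would show these choices coincide: if $U_1,U_2\in\lpf_i(U')$ with $\gl(U_1)<\gl(U_2)$, then $U'$ is a descendant of both $\first(\gl(U_1))$ and $\first(\gl(U_2))$ by Lemma~\ref{LR_Provides_LMD}, so these two ancestors lie on a single root-to-$U'$ chain; by Lemma~\ref{Label_Props}.\ref{Less_Earlier} $\first(\gl(U_1))$ is a strict ancestor of $\first(\gl(U_2))$, hence $U_2$ descends from $\first(\gl(U_1))$ and Lemma~\ref{Label_Props}.\ref{L_Preceq} gives $U_1\preceq_i U_2$. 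Thus the $\preceq_i$-least uncle is the one of least global label. Combined with the facts that a new flow begins exactly when $\labelsf(U')=\emptyset$ iff $\lpf_i(U')=\emptyset$ (Lemma~\ref{LPF_Determines_Labels}), and that the minjection assigns each such start a value from $\mathrm{FL}(l_i)$ (so $l_{i+1}$ is injective on each level, since at most $\abs{Q}$ of the $2\abs{Q}$ values are live), this yields the associated-value claim by induction on $i$, together with the complementary observation that $c_m$ is \emph{freed} on level $i+1$ exactly when $m$ stops occurring, i.e.\ when $\lmd(m,i)$ dies out or its minimal nephew is claimed by an older label.

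Next I would translate the two acceptance conditions through this association. By the per-level injectivity of $l_i$, the carrier of a value on each level is unique, so ``good in $l_i$'' and ``successful on level $i$'' become the same combinatorial event once the value and the global label are known to travel together along a flow: for $i>i_m$, whenever $m$ occurs on levels $i-1$ and $i$, the label $m$ is successful on level $i$ iff $c_m$ is good in $l_i$, and $m$ stops occurring iff $c_m$ is bad. For the forward direction, Theorem~\ref{Labeling_Succeeds} supplies a global label $m$ successful infinitely often; then $m$ occurs on all levels $\ge i_m$ by Lemma~\ref{Label_Props}.\ref{Persistent_Labels}, so $c_m$ is live at every such level, is never freed after $i_m$, hence is bad only finitely often, while the infinitely many successes of $m$ are infinitely many good events for $c_m$. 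For the converse, from a value $m'$ good infinitely often but bad finitely often I would fix a level $j_0$ past which $m'$ is never freed; since $m'$ is used infinitely often and every used-to-free transition is a bad event, after the first use past $j_0$ the value $m'$ is carried by a single never-dying flow, which by the alignment is a tail of one $\gl$-flow with a fixed global label $m$ that therefore occurs infinitely often and is successful whenever $m'$ is good. Theorem~\ref{Labeling_Succeeds} then gives that $\TGP$ is accepting.

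I expect the main obstacle to be the bookkeeping around label reuse in the alignment step: proving cleanly that the $\preceq_i$-oldest-uncle rule for $l$ and the least-label rule for $\gl$ select the same flow, and that a freed value and a reborn flow never overlap on a level, so that each ``bad'' event is attributable to the death or absorption of exactly one global label. Once the flows of $\gl$ and $l$ are shown to coincide and the per-level injectivity of $l_i$ is in hand, the correspondence between ``successful / stops occurring'' and ``good / bad,'' and the final appeal to Theorem~\ref{Labeling_Succeeds}, are routine.
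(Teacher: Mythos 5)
Your proposal is correct and takes essentially the same approach as the paper's own proof: your $c_m$ is exactly the paper's map $f(m)=l_{i_m}(\first(m))$, and your alignment step is precisely the paper's Lemma~\ref{Rabin_Corresponds}, established by the same argument that the $\preceq_i$-least uncle coincides with the least-global-label uncle (via Lemma~\ref{Label_Props}.\ref{Less_Earlier}, Lemma~\ref{Label_Props}.\ref{L_Preceq}, and the tree structure), together with Lemma~\ref{LPF_Determines_Labels} for fresh labels. Your translation of ``good''/``bad'' events into ``successful''/``stops occurring'' and the concluding appeal to Theorem~\ref{Labeling_Succeeds} likewise mirror the paper's two directions.
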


\section{A New Determinization Construction for \buchi Automata}\label{Sect:Definition}

In this section we present a determinization construction for $\A$ based on the profile tree $\TGP$.
For clarity, we call the states of our deterministic automaton \emph{macrostates}.
\begin{definition}
Macrostates over $\A$ are six-tuples $\zug{S,\preceq,l,\lreqd,G,B}$ where:
\begin{compactitem}
\item $S\subseteq Q$ is a set of states.
\item $\preceq$ is a linear preorder over $S$.
\item $l \colon S \to \set{0,\ldots, 2\abs{Q}}$ is a labeling.
\item ${\lreqd} \subseteq {\preceq}$ is another preorder over $S$.
\item $G, B$ are sets of good and bad labels used for the Rabin condition.
\end{compactitem}
\end{definition}
For two states $q$ and $r$ in $Q$, we say that $q \approx r$ if $q \preceq
r$ and $r \preceq q$.  We constrain the labeling $l$ so that it characterizes the equivalence
classes of $S$ under $\preceq$, and the preorder $\lreqd$ to be a partial order over the equivalence
classes of $\preceq$.  Let $\bQ$ be the set of macrostates.  

\begin{figure}[tb]
\begin{centering}
\subfloat[An automaton $\B$]{
\raisebox{0.3in}{
\begin{tikzpicture} [->,auto,node distance=0.6cm,line width=0.4mm]
\node[state,initial,inner sep=1pt] (q) {$q$};
\node[state,accepting] (p) [right=of q] {$p$};
\path 	(q) edge [loop above] node {a} (q)
			edge [bend left] node {a} (p)
		(p) edge [bend left] node {b} (q)
			edge [loop above] node {a,b} (p);
\end{tikzpicture}
}}
~~~~~~~~~\subfloat[The first four macrostates in the run of $D^R(\B)$ on $ab^\omega$.]
{
\begin{tikzpicture} [auto, line width=0.3mm, node distance=0.6cm] 
\node at (0,2) [anchor=west,rectangle, draw, label=left:$\q_0~\equal$] (p0) 
             {${\zug{\set{q}^0},~\emptyset,~G=\emptyset,~B=\emptyset}$};
\node at (0,1) [anchor=west,rectangle, draw, label=left:$\q_1~\equal$] (p1) 
             {${\zug{\set{q}^0 \prec \set{p}^1},~q \lrb p,~G=\emptyset,~B=\emptyset}$};
\node at (0,0) [anchor=west,rectangle, draw, label=left:$\q_2~\equal$] (p2) 
             {${\zug{\set{q}^0 \prec \set{p}^2},~q \lrb p,~G=\set{0},~B=\set{1}}$};
\node at (0,-1) [anchor=west,rectangle, draw, label=left:$\q_3~\equal$] (p3) 
             {${\zug{\set{q}^0 \prec \set{p}^1},~q \lrb p,~G=\set{0},~B=\set{2}}$};
\end{tikzpicture}
}
\caption{An automaton and four macrostates.  For each macrostate $\zug{S,\preceq,l,\lreqd,G,B}$,
we first display the equivalence classes of $S$ under $\preceq$ in angle brackets, superscripted
with the labels of $l$. We then display the $\lreqd$ relation, and finally the sets $G$ and $B$.
\label{Fig:Macrostates}}
\end{centering}
\end{figure}
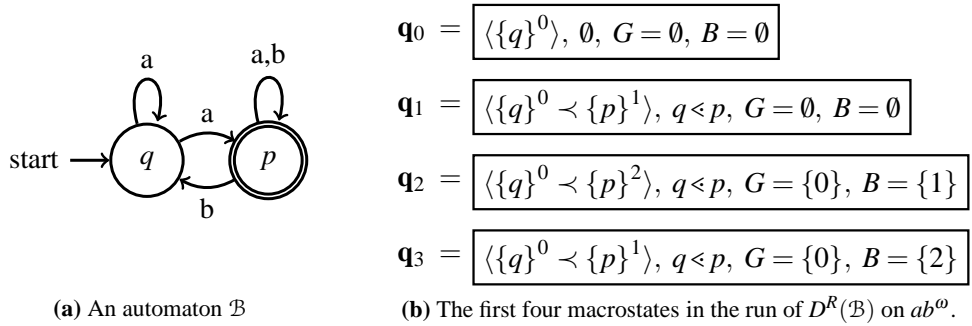

\pagebreak[3]
Before defining transitions between macrostates, we reproduce the pruning of edges
from $\Gprime$ by restricting the transition function $\rho$ with respect to $S$ and
$\preceq$.  For a state $q \in S$ and $\sigma \in \Sigma$, let $\rhoprec(q,\sigma)= \set{q' \in
\rho(q,\sigma)\mid \text{for every }r \in \rho^{-1}(q',\sigma) \cap S,~ r \preceq q}$.  Thus, when a
state has multiple incoming $\sigma$-transitions from $S$, the function $\rhoprec$ keeps only the
transitions from states maximal under the $\preceq$ relation.  For every state $q' \in \rho(S,\sigma)$, the
set $\rhoprec^{-1}(q',\sigma) \cap S$ is an equivalence class under $\preceq$.  We note that
$\rho(S,\sigma)=\rhoprec(S,\sigma)$. 

\begin{example}
Figure~\ref{Fig:Macrostates} displays the first four macrostates in a run of this determinization
construction. 
Consider the state $\q_1=\zug{\set{q,p},\preceq,l,\lreqd,\emptyset,\emptyset}$ where $q \prec p$, $q
\lreqb p$, $l(q) = 0$, and $l(p)=1$.  We have $\rho(q,a)~=~\set{p,q}$.  However, $p \in \rho(p,a)$
and $q \prec p$.  Thus we discard the transition from $q$ to $p$, and $\rhoprec(q,a) = \set{q}$.  In
contrast, $\rhoprec(p,a) = \rho(p,a) = \set{p}$, because while $p \in \rho(q,a)$, it holds that $q
\prec p$.
\end{example}

For $\sigma \in \Sigma$, we define the $\sigma$-successor of $\zug{S,\preceq,l,\lreqd,G,B}$ to be
$\zug{S',\preceq',l',\lreqd',G',B'}$ as follows.  First, $S' = \rho(S,\sigma)$.
Second, define $\preceq'$ as follows. For states $q', r' \in S'$, let $q \in
\rhoprec^{-1}(q',\sigma)$ and $r \in \rhoprec^{-1}(r',\sigma)$. As the parents
of $q'$ and $r'$ under $\rhoprec$ are equivalence classes the choice of $q$ and
$r$ is arbitrary.
\begin{compactitem}
\item If $q \prec r$, then $q' \prec' r'$.
\item If $q \approx r$ and $q' \in F$ iff $r' \in F$, then $q' \approx' r'$.
\item If $q \approx r$, $q' \not\in F$, and $r' \in F$, then $q' \prec' r'$.
\end{compactitem}

\begin{example}
As a running example we detail the transition from
$\q_1=\zug{\set{q,p},\preceq,l,\lreqd,\emptyset,\emptyset}$ to\\
$\q_2=\zug{S',\preceq',l',\lreqd',G',B'}$ on $b$.  We have $S' = \rho(\set{q,p},b) = \set{q,p}$.  To
determine $\preceq'$, we note that $p \in S$ is the parent of both $q \in S'$ and $p \in S'$.  Since
$q \not \in F$, and $p \in F$, we have $q \prec' p$.
\end{example}

Third, we define the labeling $l'$ as follows.
As in the profile tree $T$, on each level we give the label $m$ to the minimal
descendants, under the $\preceq$ relation, of the first equivalence class to be
labeled $m$.  For a state $q \in S$, define the \emph{nephews of $q$} to be
$\neph(q,\sigma) = \min_{\preceq'}(\rhoprec(\set{r \in S \mid q \lreqb r},
\sigma))$.  Conversely, for a state $r' \in S'$ we define the \emph{uncles of
$r'$} to be be $\unc(r',\sigma)=\set{q \mid r' \in
\neph(q,\sigma)}$.

Each state $r' \in S'$ inherits the oldest label from its uncles. If $r'$ has no uncles, it gets a
fresh label.  Let $\mathrm{FL}(l)= \set{m\mid m\text{ not in the range of $l$}}$ be the free labels
in $l$,  and let $\fl$ be the $\zug{\preceq',<}$-minjection from $\{r' \in S'\mid \unc(r',\sigma)
= \emptyset\}$ to $\mathrm{FL}(l)$, where $<$ is the standard order on $\set{0,\ldots,2\abs{Q}}$.
Let
$$l'(r') = \begin{cases} 
l(q),~\text{ for some }q \in \min_\preceq(\unc(r',\sigma)) & \text {if } \unc(r',\sigma) \neq \emptyset,\\
\fl(r') & \text{if }\unc(r',\sigma)
= \emptyset.\\ \end{cases} $$

\begin{example}
The nephews of $q \in S$ is the $\preceq'$-minimal subset of the set 
$\rhoprec(\set{r \in S \mid q \lreqb r}, \sigma)$.  Since $q \lreqb q$ and $q \lreqb p$, we have
that $\neph{q,b}=\min_{\preceq'}(\set{q,p})=\set{q}$. Similarly, for $p \in S$ we have $p \lreqb p$ and
$\neph(p,b)=\min_{\preceq'}(\set{p,q})=\set{q}$.  Thus for $q \in S'$, we have
$\min_\preceq(\unc(q,b))=\min_\preceq(\set{p,q})=\set{q}$ and we set $l'(q)=l(q)=0$.
For $p \in S'$, we have $\unc(p,b)=\emptyset$ and $l'(p)$ is the first unused label:
$l'(p)=2$.
\end{example}

Fourth, define the preorder $\lreqd'$ as follows.  For states $q', r' \in S'$, define $q' \lreqbp r'$ iff
$q' \approx' r'$ or there exist $q, r \in S$ so that: $r' \in \rhoprec(r, \sigma)$;  $q \in
\unc(q',\sigma)$;  and $q \lreqb r$. The labeling $l'$ depends on recalling which states descend
from the first equivalence class with a given label, and  $\lreqd'$ tracks these descendants.  

Finally, for a label $m$ let $S_m = \set{r \in S \mid l(r)=m}$ and $S'_m = \set{r' \in S' \mid
l'(r')=m}$ be the states in $S$, resp $S'$, labeled with $m$.  Recall that a label $m$ is good
either when the branch it is following visits $F$-states, or the branch dies and it moves to another
branch. Thus say $m$ is \emph{good} when: $S_m \neq \emptyset$; $S'_m \neq \emptyset$; and either
$S'_m \subseteq F$ or $\rhoprec(S_m,\sigma) \cap S'_m = \emptyset$.  $G'$ is then $\set{m \mid
m\text{ is good}}$.  Conversely, a label is bad when it occurs in $S$, but not in $S'$.  Thus the
set of \emph{bad} labels is $B'=\set{m \mid S_m \neq \emptyset,~S'_m = \emptyset}$.

\begin{example}
As $p \in \rhoprec(p,b)$; $q \in \unc(q,b)$; and $q \lrb p$, we have $q \lrbp p$.
Since $l(q)=0$ and $l'(q)=0$, but $q \not \in \rhoprec(q,b)$, we have $0 \in G'$, and as
nothing is labeled $1$ in $l'$, we have $1 \in B'$. 
\end{example}

Lemma~\ref{Sigma_Successors_Valid}\fullv{, proven in Appendix~\ref{App:Proofs}}{, proven in the full version,} states that
$\zug{S',\preceq',l',\lreqd',G',B'}$ is a valid macrostate.

\begin{restatable}{lemma}{lemSigmaSuccessorsValid}\label{Sigma_Successors_Valid}
For a macrostate $\q \in \bQ$ and $\sigma \in \Sigma$, the $\sigma$-successor of $\q$ is a macrostate.
\end{restatable}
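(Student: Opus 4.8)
The plan is to verify, one clause at a time, that the tuple $\zug{S',\preceq',l',\lreqd',G',B'}$ satisfies every requirement of the macrostate definition together with the two global constraints: that $l'$ characterizes the $\preceq'$-equivalence classes, and that $\lreqd'$ is a partial order over those classes with $\lreqd' \subseteq \preceq'$. The containments $S' = \rho(S,\sigma) \subseteq Q$ and $G',B' \subseteq \set{0,\ldots,2\abs{Q}}$ are immediate, so the real content is in $\preceq'$, $l'$, and $\lreqd'$. Throughout I would use that the source tuple $\q$ is already a valid macrostate as an inductive hypothesis, and I would lean on the observation that $\rhoprec$, $\neph$, and $\unc$ are the macrostate analogues of the pruned-\DAG and profile-tree operations $E'$, $\lsf_i$, and $\lpf_i$ from Sections~\ref{Sect:Profiles}--\ref{Sect:Labeling}; each property I check is then a local re-derivation of the corresponding statement among Lemmas~\ref{LR_Provides_LMD}--\ref{LMD_Successor} and \ref{Label_Props}.

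First I would show $\preceq'$ is a linear preorder. Well-definedness, i.e.\ independence of the chosen representatives $q,r$, follows because $\rhoprec^{-1}(q',\sigma) \cap S$ is a single $\approx$-class, so each defining clause depends only on the $\approx$-class of the parent. The decisive simplification is to recast $\preceq'$ as a lexicographic order: send each $q' \in S'$ to the pair $([q]_\approx,\, f(q'))$, where $q$ is any $\rhoprec$-parent of $q'$, ordering pairs by $\preceq$ on the first coordinate and by $0 < 1$ on the second. The three clauses say exactly that $\preceq'$ is the pullback of this order; since a lexicographic product of two linear preorders is again a linear preorder, reflexivity, totality, and transitivity follow at once. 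A byproduct I will reuse, mirroring Lemma~\ref{Gprime_Parents_Equivalent}, is that $q' \approx' r'$ holds iff the $\rhoprec$-parents of $q'$ and $r'$ are $\approx$-equivalent and $f(q') = f(r')$.

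Next I would treat $l'$. It is well-defined because, for a linear preorder, $\min_\preceq(\unc(r',\sigma))$ is either empty or a single $\approx$-class, on which $l$ is constant (here I use that $l$ already characterizes $\preceq$-classes). The fresh-label branch lands in $\set{0,\ldots,2\abs{Q}}$ because the $\zug{\preceq',<}$-minjection into $\mathrm{FL}(l)$ is defined: $l$ uses at most $\abs{Q}$ labels (one per $\preceq$-class), so $\abs{\mathrm{FL}(l)} \geq \abs{Q}+1$, while its domain has at most $\abs{S'} \leq \abs{Q}$ states. The substantive part is that $l'$ characterizes the $\preceq'$-classes. For the forward direction I would use the parent-equivalence byproduct to argue that $q' \approx' r'$ forces $\unc(q',\sigma) = \unc(r',\sigma)$ (equivalent children have $\approx$-parents, and $\lreqd$ is a relation on $\preceq$-classes), so both inherit the same label, while empty uncle sets get a common fresh label since the minjection maps a whole $\preceq'$-class to one value. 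Conversely, equal labels cannot mix a fresh with an inherited label; two equal inherited labels force $\approx$-equivalent uncle-minima, hence a common nephew set $\neph(u,\sigma)$ whose $\min_{\preceq'}$ is one $\preceq'$-class; and two equal fresh labels force one class by injectivity of the minjection across classes.

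Finally I would verify $\lreqd'$. Containment $\lreqd' \subseteq \preceq'$ is direct: if $q \in \unc(q',\sigma)$, $r' \in \rhoprec(r,\sigma)$, and $q \lreqb r$, then $r' \in \rhoprec(\set{s \mid q \lreqb s},\sigma)$, and since $q' \in \neph(q,\sigma)$ is $\min_{\preceq'}$ of exactly that set, $q' \preceq' r'$. Reflexivity is built into the $\approx'$ clause, and antisymmetry over classes is then free, as $q' \lreqbp r'$ and $r' \lreqbp q'$ give $q' \preceq' r'$ and $r' \preceq' q'$, i.e.\ $q' \approx' r'$. The main obstacle is the remaining pair of conditions: that $\lreqd'$ is well-defined on $\preceq'$-classes and transitive. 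Both amount to re-proving Lemma~\ref{LMD_Successor} in the macrostate setting, namely that $\lreqd'$ coincides with the min-cousin relation of the successor level. The delicate point in transitivity is that, chaining $q' \lreqbp r' \lreqbp s'$, one extracts an uncle $c$ of $r'$ and a $\rhoprec$-parent $b$ of $r'$ with $c \lreqb b$, and must promote the two witnessing $\lreqd$-facts into a single $\lreqd$-fact linking an uncle of $q'$ to a parent of $s'$; this comparison is exactly where the tree-order nature of $\lreqd$, inherited from its being a partial order over classes with $\lreqd \subseteq \preceq$ and mirroring Definition~\ref{Def:Min_Cousin}, must be invoked, and I expect it to be the most technical step. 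With these in hand, $G'$ and $B'$ are sets of labels by construction, completing the verification.
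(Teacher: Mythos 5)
Your verification of $\preceq'$ and of $l'$ is correct and essentially the paper's own argument: the lexicographic-pullback view of $\preceq'$ is an equivalent repackaging of the paper's case analysis, and your label/class correspondence is if anything more complete, since you argue both directions where the paper spells out only one. The genuine gap is in the $\lreqd'$ part, at exactly the point you label ``the most technical step'': the two substantive requirements --- that $\lreqd'$ respects $\preceq'$-classes (if $q' \approx' r'$, $s' \approx' t'$ and $q' \lreqbp s'$, then $r' \lreqbp t'$) and that it is transitive --- are never proved, only predicted. Worse, the route you propose for them is not available. ``Re-proving Lemma~\ref{LMD_Successor} in the macrostate setting,'' i.e.\ identifying $\lreqd'$ with the min-cousin relation on the next level of $\TGP$, presupposes the correspondence between macrostates and levels of $\TGP$; in the paper that correspondence (Lemmas~\ref{Labels_Match} and~\ref{LR_Match}) is proved \emph{after} and \emph{from} Lemma~\ref{Sigma_Successors_Valid}, so invoking it here is circular. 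Likewise, the ``tree-order nature of $\lreqd$'' on which your transitivity sketch hinges is not an inductive invariant of macrostates --- the definition only requires a partial order over the $\preceq$-classes --- and the paper obtains tree-orderness of $\lreqd$ only through that same later correspondence (Lemmas~\ref{LR_Match} and~\ref{LR_Partial_Order}). So, as written, the comparability of your two witnessing $\lreqd$-facts below a common element has no justification.

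What the paper actually does here is purely local and short, and this is what your proposal is missing: the congruence property follows directly from the \emph{source} macrostate's invariants, with no detour through $\TGP$. Given witnesses $q \in \unc(q',\sigma)$ and a $\rhoprec$-parent $s$ of $s'$ with $q \lreqb s$: from $q' \approx' r'$ one gets $\unc(q',\sigma) = \unc(r',\sigma)$ (a fact you already established in your $l'$ paragraph), so $q \in \unc(r',\sigma)$; from $s' \approx' t'$ one gets that the $\rhoprec$-parents satisfy $s \approx t$; and the inherited invariant ``if $q \approx r$, $s \approx t$, and $q \lreqb s$, then $r \lreqb t$'' yields $q \lreqb t$, hence $r' \lreqbp t'$. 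Note also that the paper's proof propagates exactly the invariants it lists --- containment $\lreqd' \subseteq \preceq'$, antisymmetry over classes, and this congruence --- and does not attempt transitivity of $\lreqd'$ at this point; if you insist on transitivity as part of what a macrostate is, you must either derive it (e.g.\ by strengthening the induction hypothesis with a comparability property) or restructure the invariants as the paper does, rather than leave it as a step you ``expect'' to go through.
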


\begin{definition}\label{Def:Rabin_Condition}
Define the DRW automaton $D^R(\A)$ to be $\zug{\Sigma,\bQ, \bQ^{in},
{\bf \rho_Q},\alpha}$, where:
\begin{compactitem}
\item $\bQ^{in}=\set{\zug{Q^{in},\preceq_0,l_0, \lreq{0}, \emptyset, \emptyset}}$, where:
\begin{compactitem} 
\item $\mathord{\preceq_0}=\lreq{0}=Q^{in} \times Q^{in}$
\item $l_0(q)=0$ for all $q \in Q^{in}$
\end{compactitem}
\item For $\q \in \bQ$ and $\sigma \in \Sigma$, let ${\bf \rho_Q}(\q,\sigma)= \set{\qp}$, where
$\qp$ is the $\sigma$-successor of $\q$
\item 
$\alpha=  \zug{G_0,B_0},\ldots,\zug{G_{2\abs{Q}},B_{2\abs{Q}}}$,
where for a label $m \in \set{0, \ldots, 2\abs{Q}}$:
\begin{compactitem}
\item $G_m = \set{\zug{S,\preceq,l,\lreqd,G,B} \mid m \in G}$
\item $B_m = \set{\zug{S,\preceq,l,\lreqd,G,B} \mid m \in B}$
\end{compactitem}
\end{compactitem}
\end{definition}

Theorem~\ref{Construction_Correct}, proven in \fullv{Appendix~\ref{App:Proofs}}{the full version},
asserts the correctness of the construction and says that its blowup is
comparable with known determinization constructions. \hide{The connection
between the construction and the tree of equivalence classes is formalized, and
the Theorem is proven, in Appendix~\ref{App:Proofs}.}

\begin{restatable}{theorem}{thmConstructionCorrect}\label{Construction_Correct}
For an NBW $\A$ with $n$ states, $L(D^R(\A)) = L(\A)$
and $D^R(\A)$ has $n^{O(n)}$ states.
\end{restatable}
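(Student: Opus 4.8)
The plan is to exhibit the single run of $D^R(\A)$ on $w$ as a level-by-level simulation of the profile tree $\TGP$ of $\A$ on $w$, and then to read acceptance off the analysis already carried out in Theorems~\ref{Rabin_Labeling_Succeeds} and~\ref{TGP_Captures}. Consider this run and write the $i$-th macrostate as $\zug{S,\preceq,l,\lreqd,G,B}$, suppressing the index $i$. First I would prove, by induction on $i$, the invariant that under the identification $q\leftrightarrow\zug{q,i}$ we have: $S$ is level $i$ of $\G$, i.e.\ $S=\set{q\mid \zug{q,i}\in V}$; the preorder $\preceq$ coincides with the level-$i$ profile preorder $\preceq_i$; the labeling $l$ equals the reuse labeling $l_i$ of Section~\ref{Sect:Rabin}; the relation $\lreqd$ equals the minimal-cousin order $\lreq{i}$; and $G$ and $B$ are precisely the labels that are good, respectively bad, in $l_i$.

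The inductive step is the crux, and it is where each declarative ingredient of the $\sigma$-successor must be matched to its operational counterpart in $\TGP$. The pruned transition $\rhoprec$ reproduces the edge-pruning that defines $\Gprime$: keeping only transitions out of $\preceq$-maximal parents is literally the rule defining $E'$, so $S'$ and its parent structure form level $i+1$ of $\Gprime$. The three clauses defining $\preceq'$ are checked against Lemma~\ref{Gprime_Captures_Profiles}: a child's profile is its parent's profile with one bit appended, so parents of distinct profile induce children of the same strict order, while a shared parent-profile is split only by $F$-membership, the non-$F$-child receiving the lexicographically smaller profile; this is exactly $\preceq_{i+1}$. For the labeling, the definition of minimal cousin together with Lemmas~\ref{LR_Provides_LMD} and~\ref{LPF_Determines_Labels} shows that the nephew sets $\neph(\cdot,\sigma)$ are the macrostate-level rendering of $\lsf_i$ and the uncle sets $\unc(\cdot,\sigma)$ that of $\lpf_i$; inheriting the $\preceq$-oldest uncle's label, and otherwise minjecting a fresh label, then reproduces the recurrence for $l_{i+1}$, using the hypotheses $\preceq=\preceq_i$ and $l=l_i$ so that ``oldest under $\preceq$'' selects the same label as the reuse-labeling definition. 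The recurrence for $\lreqd'$ is verbatim Lemma~\ref{LMD_Successor}, giving $\lreqd'=\lreq{i+1}$; and the tests on $S_m$ and $S'_m$ transcribe the definitions of ``good'' and ``bad'' in $l_{i+1}$. Lemma~\ref{Sigma_Successors_Valid} certifies that the outcome is a legal macrostate, so the induction is well posed.

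Granting the invariant, correctness is immediate. The run is accepted by the Rabin condition $\alpha$ iff some label $m$ lies in $G$ infinitely often and in $B$ only finitely often; by the invariant this says that $m$ is good in $l_i$ infinitely often and bad only finitely often. By Theorem~\ref{Rabin_Labeling_Succeeds} this holds iff $\TGP$ is accepting, and by Theorem~\ref{TGP_Captures} iff $w\in L(\A)$. Since $D^R(\A)$ is deterministic, this is exactly the statement $L(D^R(\A))=L(\A)$.

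For the size bound I would count the macrostates that can occur. By Theorem~\ref{TGPBinTree} the profile tree has width at most $n$, so a reachable macrostate has at most $n$ equivalence classes. The pair $(S,\preceq)$ is an ordered partition of a subset of $Q$, of which there are $n^{O(n)}$; the labeling $l$ assigns distinct values from $\set{0,\ldots,2\abs{Q}}$ to these classes, a further $(2n+1)^{O(n)}$ factor; and $G,B\subseteq\set{0,\ldots,2\abs{Q}}$ contribute only $2^{O(n)}$. The only delicate factor is $\lreqd$, since an arbitrary partial order on the classes could number $2^{\Theta(n^2)}$; here one first observes, exactly as in the argument that $\TGP$ is a tree, that the ancestors of a class form a chain, so $\lreq{i}$ and hence every reachable $\lreqd$ is a \emph{tree order}, and labeled tree orders on at most $n$ elements number $n^{O(n)}$. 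The product of these finitely many $n^{O(n)}$ and $2^{O(n)}$ factors is $n^{O(n)}$. The main obstacle is the inductive step of the simulation: verifying that the purely local, declarative transition recomputes each of $\preceq_{i+1}$, $l_{i+1}$, and $\lreq{i+1}$ from level-$i$ data alone. Once the tree-order structure of $\lreqd$ is noted, the counting is routine.
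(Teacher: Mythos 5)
Your proposal follows essentially the same route as the paper's own proof: the paper establishes your simulation invariant as a chain of inductive lemmas (matching $S$, $\preceq$, and $\rhoprec$ to levels of $\Gprime$; matching $l$ to the reuse labeling $l_i$; matching $\lreqd$ to $\lreq{i}$; and matching $G,B$ to good/bad labels), then concludes language equality from Theorem~\ref{Rabin_Labeling_Succeeds} and obtains the $n^{O(n)}$ bound exactly as you do, by observing that every reachable $\lreqd$ is a tree order and invoking Cayley's formula. Your identification of the tree-order observation as the one delicate point in the counting, and of the inductive step as the crux of correctness, matches the paper's structure precisely.
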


There are two simple improvements to the new construction, detailed in
\fullv{Appendix~\ref{App:Constructions}}{the full version}.  First, we do not need $2\abs{Q}$
labels: it is sufficient to use $\abs{Q}-1$ labels. Second, Piterman's technique of dynamic renaming
can reduce the Rabin condition to a parity condition.

\section{Discussion}

In this paper we extended the notion of profiles from \cite{FKVW11} and developed a theory of profile trees. This theory affords a novel determinization construction, where determinized-automaton states are sets of input-automaton states augmented with two preorders.  In the future, a more thorough analysis could likely improve the upper bound on the size of our construction.  We hope to see heuristic optimization techniques developed for this construction, just as heuristic optimization techniques were developed for Safra's construction \cite{TFVT10}.

More significantly, profile trees afford us the first theoretical underpinnings for determinization.
Decades of research on \buchi determinization have resulted in a plethora of constructions, but a
paucity of mathematical structures underlying their correctness.  This is the first new major line
of research in \buchi determinization since \cite{MS95}, and we expect it to lead to further research in this important area.

One important question is to understand better the connection between profile trees and Safra's construction.  A key step in the transition between Safra trees is to remove states if they appear in more than one node. This seems analogous to the pruning of edges from $\Gprime$.  The second preorder in our construction, namely the relation $\lreq{i}$, seems to encodes the order information embedded in Safra trees.  Perhaps our approach could lead to declarative definition of constructions based on Safra and Muller-Schupp trees.  In any case, it is our hope that profile trees will encourage the development of new methods to analyze and optimize determinization constructions.

\bibliographystyle{eptcs}
\bibliography{gandalf13}

\fullv{}{\end{document}}
\newpage
\appendix

\section{Proofs}\label{App:Proofs}

\hide{
\subsection{Section 3}
\lemGprimeParentsEquivalent*

\lemTGPBinTree*

\thmTGPCaptures*

}
\subsection{Lemma \ref{Label_Props}}

\lemLabelProps*
\begin{proof}
Parts \ref{L_Distinct} through \ref{L_Preceq}  follow immediately from the fact that
$U=\lmd(\gl(U),i)$. Part \ref{First_F} follows from the fact that, for every class $W$ on level $i$
with non-$F$-child $W'$, we have $W'=\lmd(\gl(U),i+1)$. Part \ref{Less_Earlier} follows from the
definition of $\labelsf$: a new label is always larger than every earlier label. Finally, we prove
part \ref{Persistent_Labels}.

Assume $U'$, on level $i$, is such that $\gl(U')=m$ and $U' \neq \first(m)$. By Part
\ref{L_Descends}, there must be a descendant of $\first(m)$ on level $i-1$. Let $U=\lmd(m,i\sminus
1)$. To prove $\gl(U)=m$, we show $m=\min(\labelsf(U))$.  Consider $m' < m$ such that $U$ is a
descendant of $\first(m')$, and thus $U'$ is also descendant of $\first(m')$.  By Part
\ref{Less_Earlier}, $\first(m')$ occurs before $\first(m)$.  Since $U$ is a descendant of both
$\first(m)$ and $\first(m')$, it must be that $\first(m)$ is a descendant of $\first(m')$. 

Since $m' < m$, if $U'=\lmd(m',i)$ then $\gl(U')$ would be $m'$. There must then exist a $W' \prec_i
U'$ that is a descendant of $\first(m')$, but not a descendant of $\first(m)$. By the definition of
lexicographic order, $W'$ is lexicographically smaller than every descendant, on level $i$,  of
$\first(m)$. Let $W$ be the parent of $W'$. We have that $W$ is a descendant of $m'$ that is
lexicographically smaller than every descendant, on level $i-1$, of $\first(m)$. In specific, $W
\prec_{i-1} U$, and thus $U \neq \lmd(m',i)$.  Thus $m=\min(\labelsf(U))=\gl(U)$. 
\end{proof}


\hide{
\subsection{Lemmas \ref{LR_Provides_LMD}, \ref{LPF_Determines_Labels}, and \ref{LMD_Successor}}
\lemLRProvidesLMD*

\lemLPFDeterminesLabels*

\lemLMDSuccessor*
}

\subsection{Theorem \ref{Rabin_Labeling_Succeeds}}
To show a correlation between the labeling in Section \ref{Sect:Labeling} and the labeling here, we
define a mapping, $f$, from the labels of $l$ to $\set{0, \ldots,
2\abs{Q}}$. For a label $m$, where $\first(m)$ occurs on level $i$, let $f(m)= l_i(\first(m))$.

\begin{lemma}\label{Rabin_Corresponds}
For classes $U$ on level $i$ and $U'$ on level $i+1$, if $\gl(U) = \gl(U')$, then
$l_i(U)=l_{i+1}(U')=f(\gl(U))$.
\end{lemma}
\begin{proof}
Let $k$ be the number of levels between $U$ and $\first(\gl(U))$. We prove this lemma by induction
over $k$.  As a base case, if $k=0$, then $U=\first(\gl(U))$ and by definition $f(\gl(U))=l_i(U)$.
Inductively, assume $k >0$, and assume this lemma holds for every $W$ at most $k-1$ steps removed from
$\first(\gl(W))$. Since $k > 0$, then $U \neq \first(\gl(U))$. Let $W$ be the node on level $i-1$ such
that $\gl(W)=\gl(U)$. By the inductive hypothesis, $l_{i-1}(W)=l_i(U)$. Further, since $\first(\gl(W))
= \first(\gl(U))$, we have $l_i(U)=f(\gl(U))$.  We now show that $U =
\min_{\preceq_i}(\lpf_i(U'))$.

As $\gl(U) = \gl(U')$, we have that $\gl(U) \in
\labelsf(U')$. By Lemma \ref{LPF_Determines_Labels}, this implies $U \in \lpf_i(U')$. To prove that
$U=\min_{\preceq_i}(\lpf_i(U'))$, let $W \in \lpf_i(U')$ be another class on level $i$. By
Lemma \ref{LPF_Determines_Labels}, this implies $\gl(W) \in \labelsf(U')$, and thus $\gl(U) <
\gl(W)$.  As $U'$ is a descendant of both $\first(\gl(U))$ and $\first(\gl(W))$, one is a descendant of
the other.  Since $\gl(U) < \gl(W)$, by Lemma \ref{Label_Props}.\ref{Less_Earlier} it must be that $\first(\gl(W))$  is a descendant of $\first(\gl(U))$.
Thus $W$ is a descendant of $\first(\gl(U))$, and by Lemma \ref{Label_Props}.\ref{L_Preceq} we have
$U \preceq W$. Therefore  $U=\min_{\preceq_i}(\lpf_i(U'))$, and $l_{i+1}(U') = l_i(U)$.
\end{proof}

\begin{corollary}\label{LPF_Corollary}
For every class $U$ on level $i$, it holds that $l_i(U)=f(\gl(U))$.
\end{corollary}

\thmRabinLabelingSucceeds*
\begin{proof}
We prove a relation with Theorem \ref{Labeling_Succeeds}. For the first direction, let $m$ be a label that
is successful infinitely often. We prove that $f(m)$ is bad in only finitely many $l_i$, and is good
in infinitely many $l_j$. Let $U$ on level $j$ be $\first(m)$. First, as $m$ occurs on every level
$k > j$, Lemma \ref{Rabin_Corresponds} implies $f(m)$ occurs on $k$, and thus $f(m)$ is not bad in
$l_k$.  Second, let $k > j$ be a level on which $m$ is successful. This implies there exist
classes $U$ on level $k-1$ and $U'$ on level $k$, so that $\gl(U)=\gl(U')=m$ and $U'$ is not
the non-$F$-child of $U$. Lemma \ref{Rabin_Corresponds} implies that $l_{k-1}(U)=l_{k}(U')=f(m)$, and
thus that $f(m)$ is good in $l_k$. We thus conclude $f(m)$ is good in infinitely many $l_k$. 

For the other direction, let $m'$ be a label that is bad in $l_i$ for finitely many $i$, and is good
in $l_i$ for infinitely many $i$.  Since $m'$ is bad only finitely often, there is some level after
which $m'$ is not bad.  Let $j$ be the first level after which $m'$ ceases being bad on which $m'$
occurs. This implies $m'$ occurs on every level $k > j$. Let $U$ on level $j$ be such that
$l_j(U)=m'$. Since $m'$ does not occur on $j-1$, it must be that $\lpf_j(U) = \emptyset$: otherwise
$l_j(U)$ would be $l_j(\min_{\preceq_j}(\lpf_j(U')))$.  Thus by Lemma~\ref{LPF_Determines_Labels} we
have that $\labelsf(U)=\emptyset$, and there is a label $m$ in $l$ so that $U=\first(m)$, and
$f(m)=m'$. By assumption, there are infinitely many $k > j$ so that $m'$ succeeds in $l_k$. On each
of these $k$'s, there is a class $U$ on level $k-1$ and $U'$ on level $k$ so that $l_{k-1}(U)=m'$,
$l_k(U')=m'$, and $U'$ is not the non-$F$-child of $U$. By Corollary \ref{LPF_Corollary}, $m=\gl(U)$
is good on level $k$, and $m$ is good infinitely often.
\end{proof}

\subsection{Connecting $\TGP$ to $D^R(\A)$.}

In this Appendix we prove the machinery of $D^R(\A)$ matches the inductive definitions of labeling
over $\TGP$. We first prove the the transitions of $D^R(\A)$ are valid.

\lemSigmaSuccessorsValid*
\begin{proof}
As $\zug{S,\preceq,l,\lreqd,G,B}$ is a macrostate, we have $\preceq$ is a linear preorder,
$\lreqb \subseteq \preceq$, and for every $q, r, s, t \in S$: $q \approx r$ iff $l(q) = l(r)$; $q
\approx r$ iff $q \lreqb r$ and $r \lreqb q$; and if $q \approx r$, $s \approx t$, and $q \lreqb s$,
then $r \lreqb t$. We must prove this also holds for $\lreqd'$, $\preceq'$, and $l'$ over states in
$S'$. Below, let $q', r', s', t'$ be states in $S'$, and $q, r, s, t \in S$ be such that $q' \in
\rhoprec(q, \sigma)$, $r' \in \rhoprec(r, \sigma)$, $s' \in \rhoprec(s,\sigma)$, and $t'
\in \rhoprec(t,\sigma)$.
\hide{
We restate the definitions of $\zug{S',\preceq',l',\lreqd',G,B}$. 
\begin{enumerate}
\item $\rhoprec(q,\sigma) = \set{q' \in \rho(q,\sigma)\mid \text{for every $r \in S$, if $q' \in
\rho(r,\sigma)$ then }r \preceq q}$.  Recall that the set of states $\set{q \in S \mid q' \in
\rhoprec(q,\sigma)}$ are an equivalence class under $\preceq$.
\item $\preceq'$ is defined so that  
\begin{compactitem}
\item If $q \prec r$, then $q' \prec' r'$
\item If $q \approx r$ and $q', r' \in F$, then $q' \approx' r'$.
\item If $q \approx r$ and $q' r' \not\in F$  then $q' \approx' r'$.
\item If $q \approx r$, $q' \not\in F$, and $r' \in F$, then $q' \prec' r'$. 
\end{compactitem}
\item $l'$ is defined as follows:
\begin{compactenum}
\item $\neph(q,\sigma) = \min_{\preceq'}\set{r' \mid \text{exists }r \in S,~q \lreqb r,~r' \in \rhoprec(r,\sigma)}$.
\item $\unc(r',\sigma)=\set{q \mid r' \in \neph(q,\sigma)}$
\item $\mathrm{FL}(l)= \set{m\mid m\text{ not in the range of $l$}}$
\item $\fl$ is the $\zug{\preceq',<}$-minjection from $\set{s' \in S' \mid \unc(s',\sigma) =
\emptyset}$ to $\mathrm{FL}(l)$
\item $l'(r') =
\begin{cases} 
l(q),~q \in \min_{\preceq}(\unc(r',\sigma)) & \text {if } \unc(r',\sigma) \neq \emptyset,\\
\fl(r') & \text{if }\unc(r',\sigma) = \emptyset.\\
\end{cases} $
\end{compactenum}
\item $q' \lreqbp r'$ iff $q' \approx' r'$ or there exists $q_2 \in \unc(q')$, so that $l(q_2) =
l(q')$ and $q_2 \lreqb r$.
\end{enumerate}
}

To demonstrate that $\preceq'$ is a linear preorder, we show it is reflexive, relates every two
elements, and is transitive. That $\preceq'$ is reflexive follows from the definition.  To show that
$\preceq'$ relates every two elements, note that as $\preceq$ is a linear preorder, either $q \prec
r$, $r \prec q$, or $q \approx r$. By the definition of $\preceq'$, either $q' \prec' r'$, $q'
\approx' r'$, or $r' \prec' q$.  To show that $\preceq'$ is transitive, assume $q' \preceq' r'
\preceq' s'$. By definition of $\preceq'$ we then have $q \preceq r$ and $r \preceq s$. Since
$\preceq$ is transitive, we have $q \preceq s$.  In order for $q' \not\preceq' s'$, it would need to
be that $q \approx s$, $q' \in F$, and $s' \not\in F$. If $q \approx s$, then $q \approx r$ and $r
\approx s$.  Thus if $r' \in F$, we would have $s' \preceq r'$, a contradiction.  If $r' \not \in
F$, we would have $r' \preceq q'$, a contradiction. Thus it cannot be the case that $q \approx s$,
$q' \in F$, and $s' \not \in F$, and either $q' \approx' s'$, or $q' \prec' s'$, and $\preceq'$ is
transitive and a linear preorder.

Next, we prove the labeling must give unique labels to the equivalence classes of $S'$ under
$\preceq$': that $q' \approx' r'$ iff $l'(q')=l'(r')$.  By the above properties, if $q
\approx r$, then $\neph(q,\sigma)=\neph(r,\sigma)$.  Further, $\neph(q,\sigma)$ is
an equivalence class under $\preceq'$ or is empty.  In one direction, let $q' \approx' r'$ and let
$m=l'(q')$. That $q' \approx r'$ implies for every $q \in \unc(q',\sigma)$ that $q' \approx' r'$,
and thus $\unc(q',\sigma)=\unc(r',\sigma)$. If $\unc(q',\sigma)=\emptyset$, then
$\unc(r',\sigma)=\emptyset$. As a minjection maps equivalent elements to the same value, we
have  $l(q')=\fl(q')=\fl(r')=l(r')$. Alternately, if $\unc(q',\sigma) \neq \emptyset$ then $m=l(q)$
for $q \in \unc(q',\sigma)$, and $q \in \unc(r',\sigma)$, and $l(r')=m$. 

Finally, we must demonstrate two things about $\lreqd'$: that $\lreqd' \subseteq \preceq'$, and that
$\lreqd'$ is a partial order over the equivalence classes of $\preceq'$. Assume $q' \lreqbp r'$. If
$q' \approx' r'$, then $q' \preceq' r'$. Otherwise there exists a $q_2 \in \unc(q')$ so that
$l(q_2)=l(q')$ and $q_2 \lreqb r$. This implies both $r', q' \in \neph(q_2,\sigma)$. Since
$l(q')=l(q_2)$, it must be that $q_2 \in \unc(q', \sigma)$ and thus $q' \in
\neph(q_2,\sigma)$. Thus $q' \preceq' r'$, and $\lreqd' \subseteq \preceq'$.
This implies $q' \approx' r'$ iff $q' \lreqbp r'$ and $r' \lreqbp q'$ It remains to show if $q'
\approx' r'$, $s' \approx t'$, and $q' \lreqbp s'$, then $r' \lreqbp t'$. If $q' \approx' s'$, then
$r' \approx' t'$ and $r' \lreqbp t'$. Otherwise there exists a $q_2$ so that $l(q_2)=l(q')$ and $q_2
\lreqb s$. Since $r' \approx' q'$, it holds that $l(r')=l(q_2)$. Since $s' \approx' t'$, it holds
that $s \approx t$ and $q_2 \lreqb t$.  Thus $r' \lreqbp t'$, and we have satisfied all requirements
for $\zug{S',\preceq',l',\lreqd',G,B}$ to be a macrostate.
\end{proof}

Lemma \ref{Gprime_Match} asserts that the set of states $S$ and the preorder $\preceq$
correspond to the nodes on a level $i$ of $\Gprime$ and the preorder $\preceq_i$.  Further, the
edges in $\Gprime$ correspond to transitions in $\rhoprec$. The proof relates $\sigma$-successors of
macrostates and $\preceq_{i}$.

\begin{lemma}\label{Gprime_Match}
Let $\G$ be the run \DAG of $\A$ on $w$ and let
$\q_i = \zug{S,\preceq,l,\lreqd,G,B}$ be the $i$-th macrostate in the run of $D^R(\A)$ on $w$:
\begin{compactenum}
\item\label{State_Sets_Match} $S = \set{q\mid \rzug{q,i} \in \G}$.
\item\label{Preceq_Match} For $q, r \in S$, it holds that $q \preceq r$ iff $\rzug{q,i} \preceq_i \rzug{r,i}$.
\item\label{Edges_Match} For $q \in S$ and $q' \in Q$, it holds that $q' \in \rhoprec(q,\sigma_i)$ iff $\zug{\rzug{q,i},\rzug{q',i\splus 1}} \in E'$.
\end{compactenum}
\end{lemma}
\begin{proof}
\item We proceed by induction over $i$, at each step proving \ref{State_Sets_Match} and
\ref{Preceq_Match} for $i+1$, and proving \ref{Edges_Match} for $i$.  As a base case, for $i=0$, we
have $S=Q^{in}$and $\preceq= Q^{in} \times Q^{in}$.  As $Q^{in} \cap F = \emptyset$, for every $u,
v$ on level $0$ of $\Gprime$ $h_u=0=h_v$ and $u \preceq_0 v$.  Inductively, assume that
\ref{State_Sets_Match} and \ref{Preceq_Match} holds for $\q_i= \zug{S,\preceq,l,\lreqd,G,B}$, and let
$\q_{i+1}= \zug{S',\preceq',l',\lreqd',G,B}$ be the $\sigma$-successor of $\q_i$. We show that 
\ref{Edges_Match} holds for $\q_i$, and \ref{State_Sets_Match} \ref{Preceq_Match} holds for $\q_{i+1}$.

\begin{compactenum}
\item As $S' = \rho(S,\sigma_i)$, by the inductive hypothesis and the definition of $W$ we have $S' =
\set{q'\mid \rzug{q',i\splus 1} \in \G}$. 

\item By definition, $q' \in  \rhoprec(q,\sigma_i)$ iff $q' \in \rho(q,\sigma_i)$ and for every $r \in
S$, if $q' \in \rho(r,\sigma_i)$ then $r \preceq q$. By the definition of $\G$ and the inductive
hypothesis, this holds iff $\zug{\rzug{q,i},\rzug{q',i\splus 1}} \in E$ and for every $\rzug{r,i}$,
if $\zug{\rzug{r,i},\rzug{q',i\splus 1}} \in E$, then $\rzug{r,i} \preceq_i \rzug{q,i}$. This it the
definition of $\zug{\rzug{q,i},\rzug{q',i\splus 1}} \in E'$, and thus \ref{Edges_Match} holds for $\q_i$.

\item For $q', r' \in S'$, let $q, r$ be such that $q' \in \rhoprec(q,\sigma_i)$ and $r' \in
\rhoprec(r,\sigma_i)$. By the inductive hypothesis, this implies
$\zug{\rzug{q,i},\rzug{q',i\splus 1}} \in E'$ and $\zug{\rzug{r,i},\rzug{r',i\splus 1}} \in E'$.  By
the definition of $\preceq'$, it holds that $q' \preceq' r'$ iff (a) $q \prec r$, (b) $q \approx r$
and $q' \in F$ iff $r' \in F$, or (c) $q \approx r$, $q' \not \in F$, and $r' \in F$. Recall that
$f$ is the function assigning $1$ to $F$-nodes, and $0$ to non-$F$-nodes. By the inductive
hypothesis, then, $q' \preceq r'$ iff (a) $\rzug{q,i} \prec_i \rzug{r,i}$, (b) $\rzug{q,i} \approx_i
\rzug{r,i}$ and $f(\rzug{q',i\splus 1})=f(\rzug{r',i\splus 1})$, or (c) $\rzug{q,i} \approx
\rzug{r,i}$, $f(\rzug{q',i\splus 1})=0$ and $f(\rzug{r',i\splus 1})=0$. By Lemma
\ref{Gprime_Captures_Profiles}, these are precisely the situations in which $\rzug{q',i\splus 1}
\preceq_{i+1} \rzug{r',i\splus 1}$.
\end{compactenum}
\end{proof}

Lemma \ref{Labels_Match} demonstrate the correlation the labeling $l'$ and the labeling $l_i$
of the tree of equivalence classes.  Lemma \ref{LR_Match} shows that, so defined, the preorder
$\lreqd$ describes the minimal cousin relation of Definition~\ref{Def:Min_Cousin}.  We
simultaneously prove Lemmas \ref{Labels_Match} and \ref{LR_Match} by induction. 

\begin{lemma}\label{Labels_Match}
Let $\G$ be the run \DAG of $\A$ on $w$ and $\q_i = \zug{S,\preceq,l,\lreqd,G,B}$ be the $i$th
macrostate in the run of $D^R(\A)$ on $w$.  For $q \in S$, it holds that $l(q) =
l_i([\rzug{q,i}])$.  
\end{lemma}

\begin{lemma}\label{LR_Match}
Let $\G$ be the run \DAG of $\A$ on $w$ and 
$\q_i = \zug{S,\preceq,l,\lreqd,G,B}$ be the $i$th macrostate in the run of $D^R(\A)$ on $w$. 
For $q, r \in S$ it holds that $q \lreqb r$ iff $[\rzug{q,i}] \lreq{i} [\rzug{r,i}]$
\end{lemma}
\begin{proof}
We prove these by induction over $i$. As a base case, for $i=0$, we have $S=Q^{in}$, $\lreqd= Q^{in}
\times Q^{in}$, and $l(q)=0$ for every $q \in S$.  By definition, the $0$th level of $\Gprime$ is
$\set{\rzug{q,0}\mid q \in Q^{\in}}$. As $Q^{in} \cap F = \emptyset$, for every $u, v$ on level $0$ of
$\Gprime$ $h_u=0=h_v$ and $u \preceq_0 v$.  Since there is only one equivalence class $U$, we have
$U \lreq{0} U$, and $l(U)=0$. 

Inductively, assume this holds for $\q_i= \zug{S,\preceq,l,\lreqd,G,B}$, and let $\q_{i+1}=
\zug{S',\preceq',l',\lreqd',G,B}$ be the $\sigma$-successor of $\q_i$. Note by Lemma
\ref{Sigma_Successors_Valid} that $l'$ gives unique labels to the equivalence classes of
$\preceq'$, and $\lreqd'$ is a partial order over the equivalence classes of $\preceq'$. 

\standout{Proof of Prop. \ref{Labels_Match}} For $q' \in S'$, we prove $l'(q') =
l_{i+1}([\rzug{q',i\splus 1}])$ as follows. First, by definition for $q \in S$ and $r' \in S'$, $r'
\in  \neph(q,\sigma)$ if there exists $r \in S$ so that $q \lreqb r$ and $r' \in
\rhoprec(r,\sigma_i)$. By Lemma \ref{Gprime_Match}.\ref{Edges_Match}, the inductive hypothesis, and the definition of $\TGP$, this
holds if there is a $W$ so that $[\rzug{r',i\splus 1}]$ is a child of $W$ and $[\rzug{q,i}] \lreq{i}
W$: the definition of $[\rzug{r',i\splus 1}] \in \lsf_i([\rzug{q,i}])$.  Thus $\lsf_i([\rzug{q,i}])=
\set{[\rzug{r',i\splus 1}] \mid r' \in \neph(q,\sigma)}$. By Lemma \ref{Gprime_Match}.\ref{Preceq_Match} this implies for every $r' \in
S'$, $\lpf_i([\rzug{r',i\splus 1}]) = \set{[\rzug{q,i}] \mid q \in \unc(r',\sigma)}$. We have two
cases. If $\unc(q',\sigma) \neq \emptyset$, then $\lpf_i([\rzug{q',i\splus 1}]) \neq \emptyset$, and
$l'(q') = l(q)$ for $q \in \unc(q',\sigma)$. By the inductive hypothesis and Lemma
\ref{Gprime_Match}.\ref{Preceq_Match} this implies $[\rzug{q,i}] =
\min_{\preceq_i}(\lpf_i([\rzug{q',i\splus 1}]))$ and $l_{i+1}([\rzug{q',i\splus
1}])=l_i([\rzug{q,i}])=l(q)$. Alternately, if $\unc(q',\sigma) = \emptyset$, then
$\lpf_i([\rzug{q',i\splus 1}]) = \emptyset$.  The inductive hypothesis implies that $\mathrm{FL}(l_i)$, the set of
unused labels in $l_i$, is identical to $\mathrm{FL}(l)$, the set of unused labels in $l$.  Thus the
$\zug{\preceq_{i+1},<}$-minjection from the classes on level $i\splus 1$ of $\TGP$ to
$\mathrm{FL}(l_i)$
corresponds to the  $\zug{\preceq',<}$-minjection from $S'$ to $\mathrm{FL}(l)$, and
$\fl(q')=\fl_{i+1}([\rzug{q,i\splus 1}])$. 

\standout{Proof of Prop. \ref{LR_Match}} There are two cases in which $q' \lreqbp r'$. First, if $q'
\approx' r'$, then by Lemma \ref{Gprime_Match}.\ref{Preceq_Match} $[\rzug{q',i\splus 1}] = [\rzug{r',i\splus 1}]$
and, as $\lreq{i+1}$ is reflexive, $[\rzug{q',i\splus 1}] \lreq{i+1} [\rzug{r',i\splus 1}]$.
Otherwise $q' \not\approx' r'$ and $q' \lreqbp r'$ iff there exists $r, q_2 \in S$ so that $q_2 \in
\unc(q')$, $r' \in \rhoprec(r,\sigma_i)$, and $q_2 \lreqb r$. By Lemma
\ref{Gprime_Match}.\ref{Preceq_Match} and \ref{Edges_Match} this entails $q' \lreqbp r'$ iff there
exists $U$ and $W$ so that $W$ is the parent of $[\rzug{r',i\splus 1}]$, $l_i(U) =
l_{i+1}([\rzug{q',i\splus 1}])$, and $U \lreq{i} W$. By Lemmas \ref{LMD_Successor} and
\ref{Rabin_Corresponds}, this is precisely the condition under which $[\rzug{q',i\splus 1}]
\lreq{i+1} [\rzug{r',i\splus 1}]$.
\end{proof}

Lemma \ref{Acceptance_Matches} shows that the presence of a label in $G$ corresponds to the
success success of a label in $l_i$.

\begin{lemma}\label{Acceptance_Matches}
Let $\G$ be the run \DAG of $\A$ on $w$ and $\q_i = \zug{S,\preceq,l,\lreqd,G,B}$ be the $i$th
macrostate in the run of $D^R(\A)$ on $w$.  For every label $m$, it holds that $m \in G$ iff $m$ is
good in $l_i$ and $m \in B$ iff $m$ is bad in $l_i$. 
\end{lemma}
\begin{proof}
Let $\q_{i} = \zug{S,\preceq,l,\lreqd,G,B}$ and $\q_{i+1} = \zug{S',\preceq',l',\lreqd',G,B}$. Recall that,
with respect to $i$, $R = \set{r \in S \mid l(r)=m}$ and $R' = \set{r' \in S' \mid l'(r')=m}$.
By Lemma \ref{Sigma_Successors_Valid}, we have that $R$ is an equivalence class under $\preceq$ and
$R'$ is an equivalence class under $\preceq'$.  By definition, $m$ dies in $l_i$ when $m$ is in the
range of $l_{i}$, but not in the range of $l_{i+1}$.  By Lemma \ref{Labels_Match}
this is true iff $R \neq \emptyset$, but $R' = \emptyset$: the definition of $m$ dies in
$\zug{\q_{i},\q_{i+1}}$. 

Similarly, $m$ succeeds in $l_{i+1}$ if there are classes $U$ on level $i-1$ and $U'$ on level $i$ so
that $l_{i}(U)=l_{i+1}(U')=m$, and $U'$ is not the non-$F$-child of $U$.  By Lemmas
\ref{Gprime_Match}.\ref{State_Sets_Match} and \ref{Labels_Match}, a $U$ and $U'$ exist so that
$l_{i}(U)=l_{i+1}(U')=m$, iff $U=\set{\rzug{r,i}\mid r \in R}$ and $U'=\set{\rzug{r',i+1}\mid r' \in
R'}$. This entails that $m$ succeeds in $l_{i+1}$ iff $R \neq \emptyset$ and $R' \neq \emptyset$, 
and either $U'$ is an $F$-class, or $U'$ is not a child of $U$.  If $U'$ is an $F$-class then $R'
\subseteq F$.  If $U'$ is not a child of $U$, then by the definition of $\TGP$ and
Lemma~\ref{Gprime_Match}.\ref{Edges_Match} there is no $r \in R$, $r' \in R'$ where $r' \in
\rhoprec(r,\sigma_i)$.  This entails $\set{\rhoprec(r,\sigma) \mid r \in R} \cap R' = \emptyset$. We
conclude that $m$ succeeds in $l_{i+1}$ iff $m$ succeeds in $\zug{\q_{i},\q_{i+1}}$.
\end{proof}

Finally, we must bound the number of preorders $\lreqd$ to bound the size of the automaton.

\begin{lemma}\label{LR_Partial_Order}
For a level $i$, the preorder $\lreq{i}$ is a tree order over the classes on level $i$ of
$\TGP$. 
\end{lemma}
\begin{proof}
Let $\U$ be the set of classes on level $i$ of $\TGP$. By definition, $\lreq{i}$ is a tree order if
for every $W \in \U$, the $\set{U \mid U \leq W}$ is totally ordered by $\lreq{i}$.  Consider two
classes $U_1 \lreq{i} W$ and $U_2 \lreq{i} W$. By definition, $W$ is a descendant of both
$\first(\gl(U_1))$ and $\first(\gl(U_2))$. Since $\TGP$ is a tree, one of $\first(\gl(U_1))$ or
$\first(\gl(U_2))$ is a descendant of the other.  Without loss of generality, assume $\first(\gl(U_1))$
is a descendant of $\first(\gl(U_2))$. Since $U_1$ is a descendant of $\first(\gl(U_1))$, it is a
descendant of $\first(\gl(U_2))$ too, and $U_2 \lreq{i} U_1$. 
\end{proof}

\thmConstructionCorrect*
\begin{proof}
That $L(D^R(\A)) = L(\A)$ follows from Theorem \ref{Rabin_Labeling_Succeeds} and
Lemma \ref{Acceptance_Matches}. To bound the number of macrostates
$\zug{S,\preceq,l,\lreqd,G,B}$,
we observe that the number of subsets $S$ and linear orders $\preceq$ is $n^{O(n)}$ \cite{Var80}.
The number of labelings is likewise $n^{O(n)}$. By Lemma \ref{LR_Partial_Order} and Lemma
\ref{LR_Match}, $\lreqd$ is a tree-order over the equivalence classes of $S$ under $\preceq$. By
Cayley's formula, the number of tree orders is bounded by $n^{n-2}$.  Thus the number of
macrostates is bounded by $n^{O(n)}$. 
\end{proof}

\section{Smaller Constructions}\label{App:Constructions}

We here present two variants of Definition \ref{Def:Rabin_Condition}, both of which only use a
variation of macrostates where labels are restricted to $\set{0,\ldots,\abs{Q}-1}$.  The first is a
\emph{Rabin-edge automaton}, in which the acceptance condition is a set
$\zug{G_0,B_0},\ldots,\zug{G_k,B_k}$ of pairs of sets of transitions: thus $G_j, B_j \subseteq Q^2$
for $0 \leq j \leq k$.  A run is accepting iff there exists $0 \leq j\leq k$ so that
$\zug{q_i,q_{i+1}} \in G_j$ for infinitely many $i$'s, while $\zug{q_i,q_{i+1}} \in B_j$ for only
finitely many $i$'s. The second is a \emph{parity-edge automaton}, the acceptance condition is a
parity function $\pri \colon Q^2 \to \set{0,\ldots,k}$, and a run is accepting if the smallest
element of $\set{j\mid j=\pri(q_i,q_{i+1}) \text{ for infinitely many $i$'s}}$ is even. Define the
set of \emph{tight macrostates} to be four-tuples $\zug{S,\preceq,l,\lreqd}$, where $S$, $\preceq$,
and $\lreqd$ are defined as for normal macrostates, and where $l \colon S \to \set{0,\ldots, \abs{Q}-1}$
is a tighter labeling. Let $\bQ^t$ be the set of tight macrostates.

\subsection{Tight Rabin Variant:}
Given a tight macrostate $\q \in \bQ^t$ and $\sigma \in \Sigma$, define the \emph{Rabin
$\sigma$-successor of \q} to be $\qp = \zug{S',\preceq',l',\lreqd'}$ where $S'$, $\preceq'$, and
$\lreqd'$ are defined as in Section \ref{Sect:Definition}, and $l'$ is defined as follows:
\begin{compactenum}
\item For $q \in S$, let $\neph(q,\sigma) = \min_{\preceq'}(\set{r' \mid \text{exists }r \in S,~q \lreqb r,~r' \in
\rhoprec(r,\sigma)})$, as in Section \ref{Sect:Definition}.
\item For $r' \in S'$, let $\unc(r',\sigma)=\min_\preceq(\set{q \mid r' \in \neph(q,\sigma)})$, as
in Section \ref{Sect:Definition}.
\item $\mathrm{FL}(l)= \set{m\mid m\text{ not in the range of $l$}}$ $\cup$ $\set{l(q) \mid\text{ for every $r' \in S'$}, q \not \in \unc(r',\sigma)}$. 
\item $\fl$ is the $\zug{\preceq',<}$-minjection from $\{r' \in S'\mid \unc(r',\sigma = \emptyset\}$ to
$\mathrm{FL}(l)$.
\item For $r' \in S'$, let $l'(r') =
\begin{cases} 
l(q),~q \in \unc(r',\sigma) & \text {if } \unc(r',\sigma) \neq \emptyset,\\
\fl(r') & \text{if }\unc(r',\sigma) = \emptyset.\\
\end{cases} $
\end{compactenum}
\bigskip

For $\sigma \in \Sigma$ and label $m \in \set{0, \ldots, \abs{Q}-1}$, given a tight macrostate $\q =
\zug{S,\preceq,l,\lreqd} \in \bQ^t$ and its Rabin $\sigma$-successor $\qp =
\zug{S',\preceq',l',\lreqd'}$ let $R = \zug{r \in S \mid l(r)=m}$ and $R' = \zug{r' \in S' \mid
l'(r')=m}$.  Say that $m$ \emph{Rabin-dies in $\zug{\q,\qp}$} when $R \neq \emptyset$ and $m \in
\mathrm{FL}(l)$.
Say that $m$ \emph{Rabin-succeeds in $\zug{\q,\qp}$} when it does not die in $\zug{\q,\qp}$, $R \neq
\emptyset$, $R' \neq \emptyset$, and either $R' \subseteq F$ or $\rhoprec(R,\sigma) \cap R' =
\emptyset$.

\begin{definition}\label{Def:Tight_Rabin_Condition}
Define the \DREW automata $D^T(\A)$ to be $\zug{\Sigma,\bQ^t, \bQ^{in},
{\bf \rho_Q},\alpha}$ where:
\begin{compactitem}
\item $\bQ^{in}$ is as defined in Definition \ref{Def:Rabin_Condition}
\item For $\q \in \bQ^t$ and $\sigma \in \Sigma$, ${\bf \rho_Q}(\q,\sigma)= \set{\qp}$ where $\qp$
is the Rabin $\sigma$-successor of $\q$.
\item $\alpha=  \zug{G_0,B_0},\ldots,\zug{G_{\abs{Q}-1},B_{2\abs{Q}-1}}$ where for a label $m \in
\set{0, \ldots, 2\abs{Q}}$:
\begin{compactitem}
\item $G_m = \set{\zug{\q,\qp} \mid m \text{ Rabin-succeeds in }\zug{\q,\qp}}$.
\item $B_m=\set{\zug{\q,\qp} \mid m \text{ Rabin-dies in }\zug{\q,\qp}}$
\end{compactitem}
\end{compactitem}
\end{definition}

\begin{theorem}\label{Tight_Construction_Correct}
For an NBW $\A$, $L(D^T(\A)) = L(\A)$.
\end{theorem}
\begin{proof}
For every word $w$, we show that the run $\q_0,\q_1,\ldots$ of $D^R(\A)$ on $w$ is accepting iff the
run $\q^p_0,\q^p_1,\ldots$ of $D^P(\A)$ on $w$ is accepting.  For convenience, let
$\q_i=\zug{S_i,\preceq_i,l_i,\lreq{i}}$. We first note that for every $i$, it holds that
$\q^p_i==\zug{S_i,\preceq_i,l_i^p,\lreq{i}}$: that that is to say $\q_i$ and $\q^p_i$ match on
$S_i$, $\preceq_i$, and $\lreq{i}$. For $S$ and $\preceq$, this is easy to see: the definitions of
$S'$ and $\preceq$' are identical in $\sigma$-successors and Rabin-$\sigma$-successors. For
$\lreqd$, this follows from the fact that $\lreqd'$ is defined solely with respect to $\rhoprec$
and $\unc$, which do not change from $\sigma$-successors to Rabin-$\sigma$-successors.  We pause to
note that, for every $i$ and $q \in S_i$, $q' \in S_{i+1}$, we have that $l_{i+1}(q')=l_i(q)$, iff
$q \in \unc(q',\sigma_i)$, which holds iff both $l^p_{i+1}(q')=l^p_i(q)$ and $l^p_i(q) \not \in \in
\mathrm{FL}({l^p_i})$.

In one direction, assume there is a label $m$ that dies in finitely many $\zug{\q_{i},\q_{i+1}}$,
and succeeds in infinitely many $\zug{\q_{i},\q_{i+1}}$.  We pause to note that, for every $i$ and
$q \in S_i$, $q' \in S_{i+1}$, we have that $l_{i+1}(q')=l_i(q)$, iff $q \in \unc(q',\sigma_i)$,
which holds iff both $l^p_{i+1}(q')=l^p_i(q)$ and $l^p_i(q) \not \in \in \mathrm{FL}({l^p_i})$.  Let $j$ be
the first index so that $m$ occurs in $\q_{j}$, but for every $k > j$, $m$ does not die in
$\zug{\q_{k},\q_{k+1}}$. Let $q \in S_j$ be such that $l_j(q)=m$, and let $m'=l^p_j(q)$.  For $k >
j$, define $R_k = \set{r \in S_k \mid l_k(r)=m}$, and $R^p_k= \set{r \in S_k \mid l^p_k(r)=m'}$.
Since $m$ does not die in $\zug{\q_{k},\q_{k+1}}$, $R_k$ and $R_{k+1}$ are both non-empty, and by
our above observations $m'$ does not Rabin-die in $\zug{\q^p_{k},\q^p_{k+1}}$. Further, $R^p_k=R_k$,
and $R^p_{k+1}=R_{k+1}$.  This implies that if $m$ succeeds in $\zug{\q_k,\q_{k+1}}$, then $m'$
Rabin-succeeds $\zug{\q^p_k,\q^p_{k+1}}$. Thus $m'$ Rabin-dies in finitely many
$\zug{\q^p_{i},\q^p_{i+1}}$, and Rabin-succeeds in infinitely many $\zug{\q^p_{i},\q^p_{i+1}}$, and
$D^T(\A)$ accepts $w$.

In the other direction if $D^T(\A)$ accepts $w$, this implies is a label $m$ that Rabin-dies in finitely many
$\zug{\q^p_{i},\q^p_{i+1}}$, and Rabin-succeeds in infinitely many $\zug{\q^p_{i},\q^p_{i+1}}$.  Let
$j$ be the first index so that $m$ occurs in $\q^p_{j}$, but for every $k > j$, $m$ does not
Rabin-die in $\zug{\q^p_{k},\q^p_{k+1}}$. Let $q \in S_j$ be such that $l^p_j(q)=m$, and let
$m'=l_j(q)$.  For $k > j$, define $R^p_k= \set{r \in S_k \mid l^p_k(r)=m}$, and $R_k = \set{r \in
S_k \mid l_k(r)=m'}$, Since $m$ does not Rabin-die in $\zug{\q^p_{k},\q^p_{k+1}}$, $m \not \in
F_{l^p_k}$ and $R_k$, $R_{k+1}$ are both non-empty. By our above observations $m'$ does not die in
$\zug{\q_{k},\q_{k+1}}$. Further, $R^p_k=R_k$, and $R^p_{k+1}=R_{k+1}$.  This implies that if $m$
Rabin-succeeds in $\zug{\q^p_k,\q^p_{k+1}}$, then $m'$ succeeds $\zug{\q_k,\q_{k+1}}$. Thus $m'$
dies in finitely many $\zug{\q_{i},\q_{i+1}}$, and succeeds in infinitely many
$\zug{\q_{i},\q_{i+1}}$, and $D^R(\A)$ accepts $w$.
\end{proof}

\subsection{Parity Variant}
The parity variation simply shifts labels down, instead of giving arbitrary free labels to new nodes. This
means labels in the automaton are no longer consistent with with the labels $l_i$ over \TGP. To
simplify this, we use an intermediate labeling that keeps labels consistent between two levels, but
can use the labels $\set{\abs{Q},\ldots 2\abs{Q}}$.  Given a tight macrostate $\q \in \bQ^t$ and
$\sigma \in \Sigma$, define the \emph{parity $\sigma$-successor of \q} to be $\qp =
\zug{S',\preceq',l',\lreqd'}$ where $S'$, $\preceq'$, and $\lreqd'$ are defined as in Section
\ref{Sect:Definition}, and $l'$ is defined as follows:
\begin{compactenum}
\item For $q \in S$, let $\neph(q,\sigma) = \min_{\preceq'}(\set{r' \mid \text{exists }r \in S,~q \lreqb r,~r' \in
\rhoprec(r,\sigma)})$
\item For $r' \in S'$, let $\unc(r',\sigma)=\min_\preceq(\set{q \mid r' \in \neph(q,\sigma)})$
\item $\fl$ is the $\zug{\preceq',<}$-minjection from $\{r' \in S'\mid \unc(r',\sigma = \emptyset\}$ to
$\set{\abs{Q},\ldots,2\abs{Q}}$
\item For $r' \in S'$, define the intermediate labeling\\ $l^{int}(r') =
\begin{cases} 
l(q),~q \in \unc(r',\sigma) & \text {if } \unc(r',\sigma) \neq \emptyset,\\
\fl(r') & \text{if }\unc(r',\sigma) = \emptyset.\\
\end{cases} $
\item For $r' \in S'$, define the final labeling $l'(r') = \abs{\set{l^{int}(q') \mid l^{int}(q') <
l^{int}(r')}}$
\end{compactenum}
\bigskip

For $\sigma \in \Sigma$ and label $m \in \set{0, \ldots, \abs{Q}-1}$, given a tight macrostate $\q =
\zug{S,\preceq,l,\lreqd} \in \bQ^t$ and its parity $\sigma$-successor $\qp =
\zug{S',\preceq',l',\lreqd'}$ let $l^{int}$ be the intermediate labeling defined above.  Let $R =
\zug{r \in S \mid l(r)=m}$ and $R' = \zug{r' \in S' \mid l^{int}(r')=m}$.  Note that $R'$ is defined
with respect to the intermediate labeling.  Say that a label $m$ \emph{parity-dies in
$\zug{\q,\qp}$} if $m \in R$, but $m \not\in R'$.  Say that $m$ \emph{parity-succeeds in
$\zug{\q,\qp}$} when $R \neq \emptyset$, $R' \neq \emptyset$, and either $R' \subseteq F$ or
$\rhoprec(R,\sigma) \cap R' = \emptyset$. Define the priority function $\pri \colon \bQ^t \times \bQ^t \to
\set{1,\ldots,2\abs{Q}}$ so that $\pri(\zug{\q\qp})$ is $\min(\set{2m+2\mid\text{$m$
parity-succeeds in $\zug{\q,\qp}$}} \cup \set{2m+1\mid\text{$m$ parity-dies in $\zug{\q,\qp}$}})$.

\begin{definition}\label{Def:Parity_Condition}
Define the \DPEW automata $D^P(\A)$ to be $\zug{\Sigma,\bQ^t, \bQ^{in}, {\bf \rho_Q},\pri}$ where:
\begin{compactitem}
\item $\bQ^{in}$ is as defined Definition \ref{Def:Rabin_Condition}
\item For $\q \in \bQ$ and $\sigma \in \Sigma$, ${\bf \rho_Q}(\q,\sigma)= \set{\qp}$ where $\qp$
is the parity $\sigma$-successor of $\q$.
\end{compactitem}
\end{definition}

\begin{theorem}\label{Parity_Construction_Correct}
For an NBW $\A$, $L(D^P(\A)) = L(\A)$.
\end{theorem}
\begin{proof}
As above, for every word $w$, we show that the run $\q_0,\q_1,\ldots$ of $D^R(\A)$ on $w$ is
accepting iff the run $\q^p_0,\q^p_1,\ldots$ of $D^P(\A)$ on $w$ is accepting.  For convenience, let
$\q_i=\zug{S_i,\preceq_i,l_i,\lreq{i}}$. Again, it holds that for every $i$
$\q^p_i==\zug{S_i,\preceq_i,l_i^p,\lreq{i}}$: $\q_i$ and $\q^p_i$ match on $S_i$, $\preceq_i$, and
$\lreq{i}$. For every $i$, let $l^{int}_i$ be the intermediate labeling defined above.  However, it
is no longer that case that the labels of a branch will be consistent from $\q^p_i$ to $\q^p_{i+1}$.
Instead, we must look for consistency in the intermediate labeling.  for every $i$ and $q \in S_i$,
$q' \in S_{i+1}$, we have that $l_{i+1}(q')=l_i(q)$ iff $l^{int}_{i+1}(q')=l^p_i(q)$. If
$l^p_{i+1}(q') \neq l^{int}_{i+1}(q')$, this implies there was a label $n < l^p_i(q)$ that occurs in
the range of $l^p_i$, but not in the range of $l^{int}_{i+1}$.

In one direction, assume there is a label $m$ that dies in finitely many $\zug{\q_{i},\q_{i+1}}$,
and succeeds in infinitely many $\zug{\q_{i},\q_{i+1}}$.  Let $j$ be the first index so that $m$
occurs in $\q_{j}$, but for every $k > j$, $m$ does not die in $\zug{\q_{k},\q_{k+1}}$. For every
$j' > j$, let $q_{j}' \in \S_{j'}$ be such that $l_{j'}(q_{j'})=m$.  Note that the values of
$l^p_{j'}(q_{j'})$ can only decrease: new labels are only introduced above $\abs{Q}$, and
$l^p_j(q_j) < \abs{Q}$. Thus at some point the labels of $q_{j'}$ cease decreasing, and reach a
stable point. Let $k$ be this point, and let $m' = l^p_k(q_k)$.  For a level $k' > k$, define
$R_{k'} = \set{r \in S_{k'} \mid l_{k'}(r)=m}$, and $R^p_{k'}= \set{r \in S_{k'} \mid
l^{int}_{k'}(r)=m'}$.  Since the labels of $q_{k'}$ have stopped decreasing, we have that
$R^p_{k'}=R_{k'}$. For every $k' > k$, it holds that $m'$ does not parity-die  in
$\zug{\q^p_{k'},\q^p_{k'+1}}$. Further, every label $n < m'$ must occur on every level $k' > k$:
otherwise $l^p_{k'}(q_{k'})$ would not equal $l^{int}_{k'}(q_{k'})$. Thus for every $k' > k$, there
is no label $n < m'$ that parity-dies in $\zug{\q^p_{k'},\q^p_{k'+1}}$.  Therefore
$\pri(\q^p_{k'},\q^p_{k'+1}) \geq 2m'+1$. Now consider a level $k' > k$ where $m$ succeeds in
$\zug{\q_{k'},\q_{k'+1}}$. By the note above, $R^p_{k'}=R_{k'}$, and $m'$ parity-succeeds in
$\zug{\q^p_{k'},\q^p_{k'+1}}$, and $\pri(\q^p_{k'},\q^p_{k'+1}) = 2m'+2$. We have thus shown that
the smallest priority occurring infinitely often in $2m'+2$, and thus $w$ is accepted by $D^P(\A)$.

In the other direction if $D^P(\A)$ accepts $w$, this implies is a label $m$ and level $j$ so that 
for every $k > j$, it holds $\pri(\q^p_{k},\q^p_{k+1}) \geq 2m+2$, and for infinitely many $k > j$
it holds $\pri(\q^p_{k},\q^p_{k+1}) = 2m+2$. As noted above, this implies for every $k > j$ and
$n \leq m$, $n$ does not parity-die in $\zug{\q^p_{k},\q^p_{k+1}}$, and for infinitely many $k > j$,
$m$ parity-succeeds in $\zug{\q^p_{k},\q^p_{k+1}}$.
Thus we conclude that for
every $k > j$ and $q \in S_{k}$, $l^p_{k}(q)=m$ iff $l^{int}_{k}(q)=m$.  Let $q \in S_j$ be such
that $l^p_j(q)=m$, and let $m' = l_j(q)$. For every $k > j$, let $R^p_{k}= \set{r \in S_k \mid
l^p_k(r)=m}$, and let $R_{k} = \set{r \in S_{k} \mid l_{k}(r)=m'}$. Again, we have that 
$R^p_{k} =R_{k}$, thus for every $k > j$, $m'$ does not die in $\zug{\q_{k},\q_{k+1}}$, and 
and for infinitely many $k > j$ we have $m'$ succeeds in  $\zug{\q_{k},\q_{k+1}}$. Thus $w$ is
accepted by and $D^R(\A)$.
\end{proof}

\end{document}